\lstdefinelanguage{program}{
  keywords={
    atomic,
    assume,assert,call,return,new,until,
    false,true,duplicate,restart,lock,unlock,
    locate,insert,delete,contains,removeRight,rotateRightLeft,
    rcu_read_lock,rcu_read_unlock,synchronize_rcu
  },  
  morecomment=[l]{//},
  morecomment=[s]{/*}{*/},
  morecomment=[n]{(**}{**)},
  mathescape=true,
  escapeinside=`',
}
\lstdefinestyle{lnumbers}{numbers=left,  stepnumber=1, numberblanklines=false, numberstyle=\tiny,basicstyle=\scriptsize, numbersep=3pt, escapeinside={/*}{*/}}
\lstdefinestyle{nonumbers}{numbers=none, escapeinside={/*}{*/}}
\newif\ifnitpick
\newif\iftodobom
\newif\iflong
\Crefname{conjecture}{Conjecture}{Conjectures}
\Crefname{proposition}{Proposition}{Propositions}
\Crefname{lemma}{Lemma}{Lemmas}
\Crefname{corollary}{Corollary}{Corollaries}
\Crefname{example}{Example}{Examples}
\Crefname{definition}{Definition}{Definitions}
\Crefname{figure}{Fig.}{Fig.}
\Crefname{section}{Sec.}{Sec.}
\newtheorem{theo}{Theo}[section] 
\newtheorem{definition}[theo]{Definition}
\newtheorem{theorem}[theo]{Theorem}
\newtheorem{lemma}[theo]{Lemma}
\newcommand{\record}[1]{\mathit{record}({#1})}
\newcommand{\recordnoarg}{\mathit{record}}
\newcommand{\guess}[1]{\mathit{guess}({#1})}
\newcommand{\semanticdomainstyle}[1]{
  \ensuremath{\mathchoice
    {\mbox{\normalfont\ensuremath{#1}}}
    {\mbox{\normalfont\ensuremath{#1}}}
    {\mbox{\normalfont\scriptsize\ensuremath{#1}}}
    {\mbox{\normalfont\tiny\ensuremath{#1}}}}}
\reservestyle{\semanticdomain}{\semanticdomainstyle}
\newcommand{\proglangkeywordstyle}[1]{\ensuremath{\mathbf{#1}}\xspace}
\reservestyle{\proglangkeyword}{\proglangkeywordstyle}
\lstdefinelanguage{program}{
  basicstyle=\tt
  classoffset=0,
  keywords={var,const,reg,proc,skip,assume,if,then,else,
    while,do,call,return,post,await,ewait,yield,init,
    let,in,and,or,true,false,
    for,from,to,
    future, touch,
    fork, rfork, join,
    async, finish,returns,
    spawn, sync, inlet,until,procedure,
    eventloop,
    foreach,
    atomic, method, wait, signal,
    thread, client, begin, end, spawn, repeat, times,
  },
  keywordstyle=\bf,
  classoffset=1,
  morekeywords={g,l},
  keywordstyle=\tt,
  classoffset=0,
  basicstyle=\tt,
  commentstyle=\itshape,
  morecomment=[l]{//},
  morecomment=[s]{/*}{*/},
  morecomment=[n]{(*}{*)},
  mathescape=true,
  escapeinside=`'
}
\begin{document}

\thispagestyle{empty}
\begin{titlepage}

\title{Putting Strong Linearizability in Context:\\
Preserving Hyperproperties in\\Programs that Use Concurrent Objects}

 \author{Hagit Attiya$^1$ \and Constantin Enea$^2$}

 \date{
 $^1$Technion - Israel Institute of Technology, Israel \\
 \texttt{hagit@cs.technion.ac.il} \\
 $^2$Université de Paris, IRIF, CNRS, Paris, France \\
 \texttt{cenea@irif.fr}
 }

\maketitle

\begin{abstract}
It has been observed that linearizability, the prevalent consistency
condition for implementing concurrent objects, does not preserve some
probability distributions.
A stronger condition, called \emph{strong linearizability} has been
proposed, but its study has been somewhat ad-hoc.
This paper investigates strong linearizability by casting it in the
context of \emph{observational refinement} of objects.
We present a strengthening of observational refinement,
which generalizes strong linearizability,
obtaining several important implications.

When a concrete concurrent object \emph{refines} another,
more abstract object---often sequential---the correctness of a program
employing the concrete object can be verified by
considering its behaviors when using the more abstract object.
This means that \emph{trace properties} of a program using the concrete
object can be proved by considering the program with the abstract object.
This, however, does not hold for \emph{hyperproperties}, including
many security properties and probability distributions of events.

We define \emph{strong observational refinement},
a strengthening of refinement that preserves hyperproperties,
and prove that it is \emph{equivalent}
to the existence of \emph{forward simulations}.
We show that strong observational refinement generalizes
\emph{strong linearizability}.
This implies that strong linearizability is also equivalent to forward
simulation, and shows that strongly linearizable implementations can be
composed both horizontally (i.e., \emph{locality})
and vertically (i.e., with \emph{instantiation}).

For situations where strongly linearizable implementations do not exist
(or are less efficient), we argue that reasoning about hyperproperties
of programs can be simplified by strong observational refinement of
non-atomic abstract objects.
\end{abstract}

\end{titlepage}

\section{Introduction}

\emph{Abstraction} is key to the design and verification of large,
complicated software.
In concurrent programs, featuring intricate interactions between multiple
threads, abstraction is often used to encapsulate low-level shared memory
accesses within high-level abstract data types, called \emph{concurrent objects}.
Arguing about properties of such a program $P$ is greatly simplified
by considering a concurrent object as a \emph{refinement} of another,
more abstract one:
a \emph{concrete} object $O_1$ is said to \emph{observationally refine}
another, \emph{abstract} object $O_2$ if any
behavior $P$ can observe with $O_1$ is also observed by $P$ with $O_2$.
When $O_2$ is an \emph{atomic object},
in which each operation is applied in exclusion, observational refinement
is equivalent to linearizability~\cite{DBLP:conf/popl/BouajjaniEEH15,
DBLP:journals/tcs/FilipovicORY10}.\footnote{
    \emph{Linearizability}~\cite{DBLP:journals/toplas/HerlihyW90}
    states that a concurrent execution of operations corresponds
    to some serial sequence of the same operations permitted by the specification.}

Intuitively, linearizability, and more generally, observational refinement,
seem to imply that anything we can prove
about $P$ with $O_2$ also holds when $P$ executes with $O_1$.
This is indeed the case when considering \emph{trace properties},
i.e., properties that are specified as \emph{sets of traces},
in particular, safety properties.

Unfortunately, many interesting properties cannot be specified as
properties of individual traces, i.e., as trace properties.
Notable examples are security properties such as
\emph{noninterference}~\cite{DBLP:conf/sp/GoguenM82a},
stipulating that commands executed by users with high clearance have
no effect on system behavior observed by users with low clearance.
Other examples are quantitative properties like bounds on 
the \emph{probability distribution} of events, e.g., 
the mean response time over sets of executions.
Indeed, while the fact that \emph{the average response time of an
operation in an execution is smaller than some bound $X$} is a trace property,
the requirement that the \emph{average response time over all 
executions is smaller than $X$} cannot be stated as a trace property.

\emph{Hyperproperties}~\cite{DBLP:journals/jcs/ClarksonS10}, namely,
sets of sets of traces, allow to capture such expectations.
By definition, every property of system behavior (for systems modeled
as trace sets) can be specified as a hyperproperty.
It is known that observational refinement does not preserve
hyperproperties~\cite{DBLP:conf/sp/McLean94}, in general.
More recently,
it has been shown that linearizability does not preserve probability
distributions over traces~\cite{DBLP:conf/stoc/GolabHW11},
allowing an adversary scheduler additional control over
the possible outcomes of a distributed randomized program.
(An example appears in Section~\ref{sec:motivation}.)

This paper defines the notion of \emph{strong observational refinement},
relates it to hyperproperties, and shows its equivalence to
\emph{forward simulations}.
We show that strong observational refinement generalizes strong
linearizability~\cite{DBLP:conf/stoc/GolabHW11}.\footnote{
    \emph{Strong linearizability} requires that \emph{the
    linearization of a prefix of a concurrent execution is a prefix
    of the linearization of the whole execution},
    see Section~\ref{section:forward simulations}.}
We also explore the possibility of using---instead of the classical sequential
specifications---\emph{concurrent} specifications, which are nevertheless simpler.

To explain our results in more detail, consider a \emph{labeled
transition system} (\emph{LTS}) that, intuitively,
represents all the executions of the object under the most general client
(that may call methods in any order and from any thread).
A state of the LTS corresponds to a state of the object and
transitions correspond to method calls/returns,
or internal steps within a method invocation.
A \emph{sequential specification} corresponds to a concurrent object
where essentially, method bodies consist of a single atomic step
that acts according to the sequential specification
(hence,
they are totally ordered in time during any execution).

An LTS $O_1$ \emph{observationally refines} an LTS $O_2$ if and only if
the \emph{histories} (i.e., sequences of call/return actions)
generated by $O_1$ are included in
those generated by $O_2$~\cite{DBLP:conf/popl/BouajjaniEEH15}.
In this way, observational refinement of two LTSs reduces to a inclusion
between their traces, when projected over some alphabet $\Gamma$
(in this case, $\Gamma$ is the set of call/return actions),
called \emph{$\Gamma$-refinement}.

A \emph{forward simulation} maps every step of $O_1$ to a sequence of steps of $O_2$,
starting from the initial state of $O_1$ and advancing in a forward manner;
a \emph{backward simulation} is similar, but it goes in the reverse direction,
from end states back to initial states.
When proving linearizability, an important special case of forward
simulation is the identification of \emph{fixed} linearization points.
A forward/backward simulation can be parameterized by an alphabet $\Gamma$,
in which case the sequence of steps of $O_2$ associated to a step of $O_1$
should contain a step labeled by an action $a\in \Gamma$ if and only if
the step of $O_1$ is also labeled by $a$.
It is known~\cite{DBLP:journals/iandc/LynchV95} that $\Gamma$-refinement
holds if and only if there is a combination of $\Gamma$-forward and $\Gamma$-backward
simulations from $O_1$ to $O_2$; a forward simulation suffices when
the projection of $O_2$ on $\Gamma$ is deterministic.
(See Section~\ref{sec:lts}.) 

The notion of strong observational refinement relies on the concept
of a \emph{deterministic scheduler} that resolves the non-determinism
introduced for instance, from
the execution of internal actions by parallel threads
(it is similar to the notion of strong adversary introduced in the
context of randomized algorithms~\cite{DBLP:journals/dc/Aspnes03}).
$O_1$ \emph{strongly (observationally) refines} $O_2$ if a
program $P$ running under a deterministic schedule with $O_1$ makes
the same observations as when $P$ runs with $O_2$
with a possibly-different deterministic schedule.
(The complete definition appears in Section~\ref{sec:sor}.)
We prove that strong observational refinement implies the existence of a forward simulation.
The converse direction is fairly straightforward, proving the equivalence
of these two notions, and imply compositional proof methodologies. 
(These results appear in Section~\ref{section:forward simulations}.)
By relating strong linearizability to strong observational refinement,
we prove that a concrete object is a strong linearization of an atomic
object \emph{if and only if}
there is an appropriate forward simulation between the two
(Appendix~\ref{app:strong-lin}).
This immediately implies methods for composing strongly linearizable
concurrent objects.

To address situations where there is no concrete object that
strongly linearizes a particular atomic object
\cite{DBLP:conf/podc/HelmiHW12,DBLP:conf/wdag/DenysyukW15},
or in cases where such an object is less efficient,
we suggest concrete objects that strongly refine other, more abstract
objects that still expose some concurrency.
This follows~\cite{DBLP:conf/cav/BouajjaniEEM17} and allows to
simplify reasoning about randomized programs even when using objects like
the Herlihy\&Wing queue~\cite{DBLP:journals/toplas/HerlihyW90}
or snapshot objects~\cite{DBLP:journals/jacm/AfekADGMS93},
which are not strongly linearizable.
For example,
in the case of atomic snapshots,
the abstract object that obtains several instantaneous snapshots during
the \emph{scan} operation and then \emph{arbitrarily} returns one of them
(see Section~\ref{sec:applications}).
Arguing about a program using this abstract object is simpler, while
still exposing the power of an adversarial scheduler to manipulate
the responses of a scan.

 \section{Motivating Example: A Stack Implementation that Leaks Information}
\label{sec:motivation}

When an object $O_1$ refines a specification object $O_2$,
any safety property of a program $P$ using $O_2$
(that refers only to $P$'s state and is agnostic to the internal state of the object $O_2$)
is preserved when $O_2$ is replaced by its refinement $O_1$.
However, refinement does not preserve \emph{hyperproperties}~\cite{DBLP:journals/jcs/ClarksonS10},
i.e., properties of \emph{sets} of traces.

\begin{figure}[bt]
\center
  \footnotesize
  \begin{minipage}[c]{28mm}
    \begin{program}
a = push(0);
low1 = pop();
    \end{program}
  \end{minipage}
  \begin{minipage}[c]{5mm}
  $||$
  \end{minipage}
  \begin{minipage}[c]{28mm}
    \begin{program}
b = push(1);
low2 = pop();
    \end{program}
  \end{minipage}
  \begin{minipage}[c]{5mm}
  $||$
  \end{minipage}
  \begin{minipage}[c]{50mm}
    \begin{program}
assume a == b == OK;
push(2);
high = highBooleanInput();
    \end{program}
  \end{minipage}
\vspace{-3mm}
  \footnotesize \caption{ A program with three threads using a concurrent stack (we assume that {\tt push} returns the value {\tt OK}). Statements in the same thread are aligned vertically. The statement {\tt assume} blocks the program when the Boolean condition is not satisfied and {\tt highBooleanInput} returns a Boolean value labeled as high clearance. The {\tt assume} statement enforces that {\tt push(0)} and {\tt push(1)} finish before {\tt push(2)} starts.}
  \label{fig:prog}
\vspace{-4mm}
\end{figure}

We explain this issue by considering \emph{noninterference}~\cite{DBLP:conf/sp/GoguenM82a}
in the program of Figure~\ref{fig:prog}.
This program invokes methods of a concurrent stack, and we wish to show that
independently of the thread scheduler,
none of the low clearance variables {\tt low1} and {\tt low2} can leak the value of the high clearance variable ${\tt high}$, i.e., it is impossible to define a thread scheduler which admits only executions where ${\tt low1}={\tt high}$ or only executions where ${\tt low2}={\tt high}$. A precise notion of scheduler will be defined below, but for now, it is enough to think of a thread scheduler as a monitor that chooses to activate threads depending on the history of the execution. This property is satisfied by the program when invoking an \emph{atomic} concurrent stack.
Indeed, assuming that {\tt push(0)} is scheduled before the one of {\tt push(1)} 
(the other case is similar), then either 
(i) {\tt push(2)} is scheduled before at least one of the {\tt pop} invocations, 
and then, ${\tt low1},{\tt low2}\in\{1,2\}$ which shows that none of these 
two variables equals the value of ${\tt high}$ when ${\tt high}=0$, or 
(ii) {\tt push(2)} is scheduled after the {\tt pop} invocations, and then, 
the scheduler admits executions where ${\tt low1}=b_1$, ${\tt low2}=b_2$, 
and ${\tt high}=0$ if and only if it admits executions where 
${\tt low1}=b_1$, ${\tt low2}=b_2$, and ${\tt high}=1$ (for $b_1, b_2\in\{0,1,2\}$).

\begin{wrapfigure}{l}{7.8cm}
\includegraphics[scale=.59]{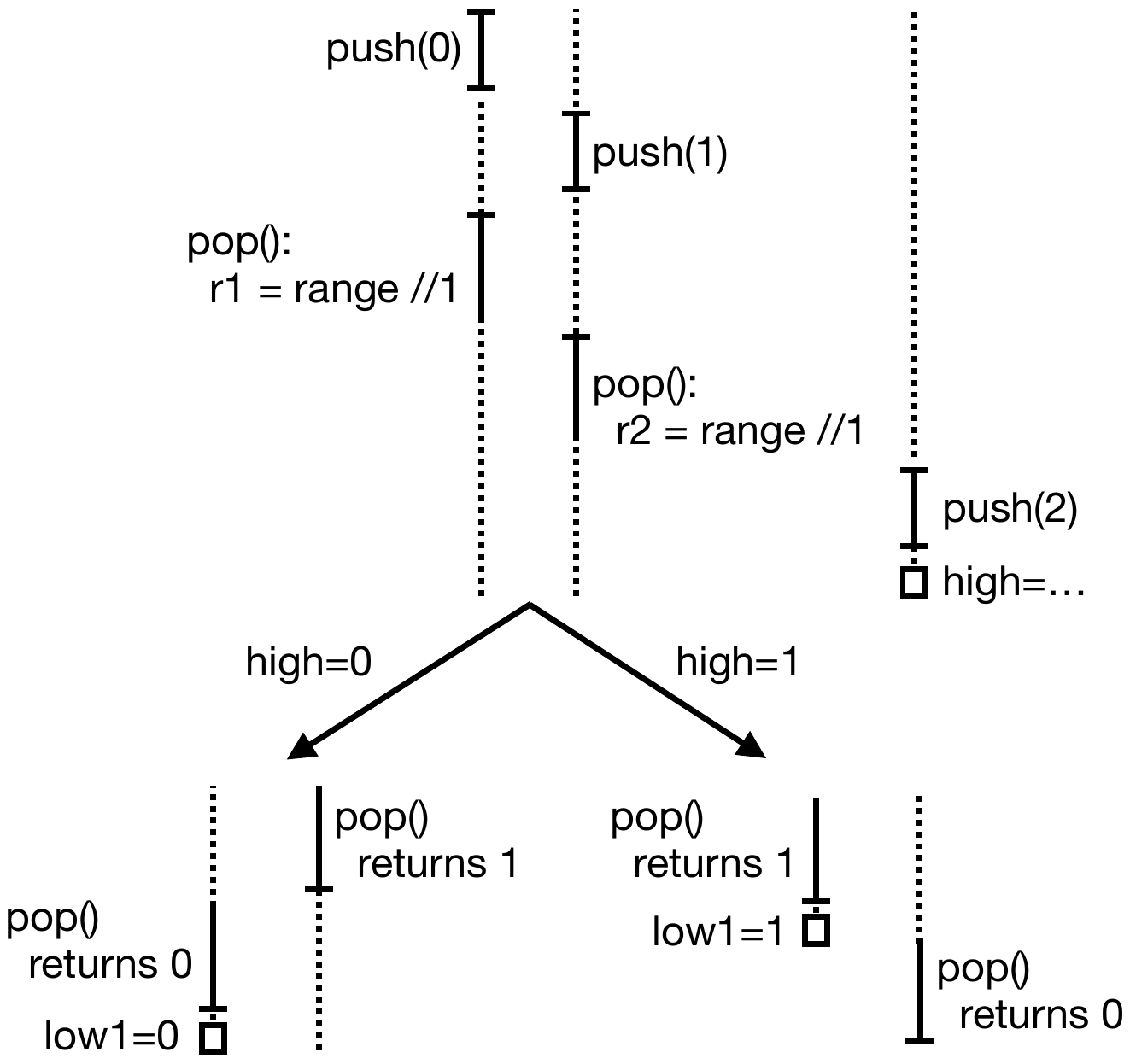}
  \footnotesize \caption{ A scheduler for the program in Figure ~\ref{fig:prog}. Time flows from top to bottom. Dotted edges denote periods of time where a thread is not active.}
  \label{fig:sched}
\vspace{-2mm}
\end{wrapfigure}
This property is however not satisfied by this program when using the 
concurrent stack of Afek et al.~\cite{DBLP:journals/dc/AfekGM07}.
This stack stores the elements into an infinite array {\tt items};
a shared variable {\tt range} keeps the index of the first unused position
in {\tt items}.
The push method stores the input value in the array while also incrementing
{\tt range} (the details are irrelevant for our example).
The pop method first reads {\tt range} and then traverses the array backwards
starting from this position, until it finds a position storing a non-null
element (array cells can be nullified by concurrent pop invocations).
It atomically reads this element and stores null in its place.
If the pop reaches the bottom of the array without finding non-null cells,
then it returns that the stack is empty.
Unlike the case of atomic stacks, Figure~\ref{fig:sched} shows
a thread scheduler where {\tt low1} stores the value of {\tt high}.
This scheduler imposes that {\tt push(0)} executes before {\tt push(1)}
(so that 0 occurs before 1 in the array {\tt items}),\footnote{
A similar scheduler can be defined when {\tt push(1)} executes before {\tt push(0)}.}
and then preempts pop invocations just after reading the value of {\tt range}
which equals 1 (assuming the array indexing starts at 0).
Then, it schedules the third thread and, depending on the value of {\tt high},
it schedules the rest of the {\tt pop} invocations such that the {\tt pop}
in the first thread extracts a value which equals {\tt high}.
This ensures that {\tt low1} == {\tt high}.

This shows that noninterference in programs invoking the atomic stack 
is not preserved when the latter is replaced by  the concurrent stack of
Afek et al.~\cite{DBLP:journals/dc/AfekGM07}, 
although the latter is a refinement of the atomic stack.
Section~\ref{sec:sor} presents a stronger notion of observational refinement 
that preserves hyperproperties and in particular, noninterference.
 \section{Modelling Concurrent Objects as Labeled Transition Systems}
\label{sec:lts}
\label{sec:objects}

\emph{Labeled transition systems} (\emph{LTS}) capture
shared-memory programs with an arbitrary number of threads,
abstracting away the details of any particular programming system
irrelevant to our development.

An LTS $A=(Q,\Sigma,  s_0, \delta)$ over the possibly-infinite alphabet
$\Sigma$ is a possibly-infinite set $Q$ of states with
initial state $s_0 \in Q$, and a transition relation $\delta \subseteq Q \times \Sigma \times
Q$.
The $i$th symbol of a sequence $\tau \in \Sigma^*$ is denoted $\tau_i$, and $\epsilon$ is the empty
sequence.
An \emph{execution} of $A$ is an alternating sequence of states and transition labels (also called \emph{actions})
$\rho = s_0, a_0,s_1\ldots a_{k-1},s_k$ for some $k>0$ such that $(s_i, a_i, s_{i+1})\in \delta$
for each $0\leq i<k$. We write $s_i\xrightarrow{a_i\ldots a_{j-1}}_A s_j$ as shorthand for
the subsequence $s_i,a_i,...,s_{j-1},a_{j-1},s_j$ of $\rho$. 
(in particular $s_i\xrightarrow{\epsilon}s_i$).

The projection $\tau| \Gamma$ of a sequence $\tau$ is the maximum subsequence of $\tau$ over
 alphabet $\Gamma$. This notation is extended to sets of sequences as usual.
A \emph{trace} of $A$ is the projection $\rho | \Sigma$ of an execution $\rho$ of $A$.
The set of executions, resp., traces, of an LTS $A$ is denoted by $\mathit{E}(A)$, resp., $\mathit{T}(A)$.
An LTS is \emph{deterministic} if for any state $s$ and any sequence $\tau\in \Sigma^*$, there is at most
one state $s'$ such that $s\xrightarrow{\tau}s'$. More generally, for an alphabet $\Gamma\subseteq \Sigma$,
an LTS is \emph{$\Gamma$-deterministic} if for any state s and any sequence $\tau\in \Gamma^*$, there
is at most one state $s'$ such that $s\xrightarrow{\tau'}s'$ and $\tau$ is a subsequence of $\tau'$.

An \emph{object} is a \emph{deterministic} LTS over alphabet $C \cup R\cup \Sigma_o$ where $C$, resp., $R$, is the set of call, resp., return, actions, and $\Sigma_o$ is an alphabet of internal actions. Formally, a call action $call(m,d,k)$, resp., a return action $ret(m,d,k)$, combines a method $m$ and argument, resp., return value, $d$ with an operation identifier $k$. Operation identifiers are used to pair call and return actions.
We assume that the traces of an object satisfy standard well-formedness properties,
e.g., return actions correspond to previous call actions.
Given a standard description of an object implementation as a set of methods, its LTS represents the executions of its most general client (that may call methods in any order and from any thread). The states of the LTS represent the shared state of the object together with the local state of each thread. The transitions correspond to statements in the method bodies (in which case they are labeled by internal actions in $\Sigma_o$), or call and return actions. For simplicity, we ignore the association of method invocations to threads since it is irrelevant to our development.
A trace $\tau$ of an object $O$ projected over call and return actions is called a \emph{history} of $O$, and it is denoted by $\mathit{hist}(\tau)$.
The set of histories admitted by an object $O$ is denoted by $H(O)$.
Call and return actions $call(m,\_,k)$ and $ret(m,\_,k)$ are called \emph{matching} when they contain the same operation identifier. A call action is called \emph{unmatched} in a history $h$ when $h$ does not contain the matching return.
A history $h$ is called \emph{sequential} if every call $call(m,d,k)$ is immediately followed by the matching return $ret(m,\_,k)$. Otherwise, it is called \emph{concurrent}.

\emph{Linearizability}~\cite{DBLP:journals/toplas/HerlihyW90} is a standard correctness criterion for concurrent objects expressing conformance to a given sequential specification. This criterion is based on a relation $\sqsubseteq$ between histories: $h_1\sqsubseteq h_2$ iff there exists a well-formed execution $h_1'$ obtained from $h_1$ by appending return actions that correspond to unmatched call actions in $h_1$ or deleting unmatched call actions, such that $h_2$ is a permutation of $h_1'$ that preserves the order between return and call actions, i.e., a given return action occurs before a given call action in $h_1'$ iff the same holds in $h_2$. We say that $h_2$ is a \emph{linearization} of $h_1$.
A history $h_1$ is called \emph{linearizable} w.r.t. an object $O_2$ iff there exists a sequential history $h_2\in H(O_2)$ such that $h_1\sqsubseteq h_2$. An object $O_1$ is linearizable w.r.t. $O_2$, written $O_1\sqsubseteq O_2$, iff each history $h_1\in H(O_1)$ is linearizable w.r.t. $O_2$.

Linearizability has been shown equivalent to a criterion called \emph{observational refinement} which states that every behavior of every program possible using a concrete object would also be possible were the abstract object used instead~\cite{DBLP:conf/popl/BouajjaniEEH15,DBLP:journals/tcs/FilipovicORY10} (the precise meaning of behavior is given below). Actually, this result holds only when the abstract object is \emph{atomic}, i.e., it admits every history which is linearizable w.r.t. a set of \emph{sequential} histories $Seq$ (formally, $H(O)=\{h: \exists h'\in\mathit{Seq}.\ h\sqsubseteq h'\}$). Intuitively, an atomic object corresponds to an implementation where the methods of a sequential object are guarded by global-lock acquisition.

A \emph{program} is a \emph{deterministic} LTS over alphabet $C \cup R\cup \Sigma_p$ where $\Sigma_p$ is an alphabet of \emph{program actions}. Program actions can be interpreted for instance, as assignments to some program variables which are disjoint from the variables used by the object, or as different outcomes of a random choice. Call and return actions represent the interaction between the program and the object. The executions of a program $P$ with an object $O$ are obtained as the executions of the LTS product $P\times O$~\footnote{The \emph{product} $A_1 \times A_2$ of two LTSs is defined as usual, respecting $E(A_1 \times A_2)|(\Sigma_1 \cap \Sigma_2) =
E(A_1)|\Sigma_2 \cap E(A_2)|\Sigma_1$.}. As program and object alphabets only intersect on call and return actions, our formalization supposes that programs and objects communicate only through method calls and returns, and not, e.g.,~through additional shared random-access memory.

Observational refinement between objects $O_1$ and $O_2$ means that any ``observation'' extracted from a program execution possible with $O_1$ (referred to as a ``concrete'' object), is also possible with $O_2$ (referred to as the ``specification''), where an ``observation'' in this context is the projection over the program actions.

\vspace{-2.5mm}
\begin{definition}

  The object $O_1$ \emph{observationally refines} $O_2$, written $O_1 \leq O_2$, iff
  \vspace{-3mm}
  \begin{align*}
    T(P \times O_1)|\Sigma_p \ \subseteq \ T(P \times O_2)|\Sigma_p \\[-8.5mm]
  \end{align*}
  for all programs $P$ over alphabet $\Sigma_p\cup C \cup R$.
\vspace{-2mm}
\end{definition}

The following theorem relates observational refinement to a standard notion 
of refinement between LTSs, defined roughly as inclusion of traces, 
in the context of concurrent objects.\footnote{
This relationship has been shown under natural assumptions about objects and programs~\cite{DBLP:conf/popl/BouajjaniEEH15}. For instance, concerning objects, it is assumed that call actions cannot be disabled and they cannot disable other actions (they can be reordered to the left while preserving the computation), and return actions cannot enable other actions.}
For two LTSs $A_1$ and $A_2$, we say that $A_1$ \emph{refines} $A_2$ when $T(A_1)\subseteq T(A_2)$. More generally, for an alphabet $\Gamma$, $A_1$ \emph{$\Gamma$-refines} $A_2$ when $T(A_1)| \Gamma\subseteq T(A_2)| \Gamma$. By an abuse of notation, $A_1\subseteq_\Gamma A_2$ denotes the fact that $A_1$ $\Gamma$-refines $A_2$ (we will omit $\Gamma$ when it is understood from the context).
Intuitively, the alphabet $\Gamma$ represents a set of actions which are ``observable'' in both $A_1$ and $A_2$, the actions not in $\Gamma$ are considered to be ``internal'' to $A_1$ or $A_2$.
Observational refinement is equivalent to $(C\cup R)$-refinement which means that the histories of the concrete object are included in those of the specification (note that ``plain'' refinement would not hold because the internal actions may differ).

\vspace{-1mm}
\begin{theorem}[\cite{DBLP:conf/popl/BouajjaniEEH15,DBLP:journals/tcs/FilipovicORY10}]\label{th:ref}
$O_1\leq O_2$ iff $O_1\subseteq_{C\cup R} O_2$. If $O_2$ is atomic, then $O_1\leq O_2$ iff $O_1\sqsubseteq O_2$.
\vspace{-2mm}
\end{theorem}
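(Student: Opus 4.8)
The statement has two parts, and the plan is to route both through the equivalence between observational refinement and inclusion of histories. Recall that $(C\cup R)$-refinement of $O_1$ in $O_2$ is literally $H(O_1)\subseteq H(O_2)$, since a history is a trace projected onto $C\cup R$. The pivot of the whole argument is that programs and objects synchronize only on the call/return alphabet: by the product synchronization identity quoted in the footnote, every execution of $P\times O$ factors into a program part over $\Sigma_p$ and an object part over $C\cup R$ that agree exactly on their shared $C\cup R$ actions. So the first claim amounts to showing that $O_1\leq O_2$ holds for all $P$ iff $H(O_1)\subseteq H(O_2)$.

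For the easy direction I would assume $H(O_1)\subseteq H(O_2)$, fix a program $P$, and take a trace $t\in T(P\times O_1)|\Sigma_p$. The execution producing $t$ determines a history $h\in H(O_1)$, namely its $C\cup R$ projection, which is simultaneously a legal client behavior of $P$. Since $h\in H(O_2)$, the object $O_2$ admits an execution realizing the same call/return interface; interleaving it with the same $P$-steps that produced $t$ yields an execution of $P\times O_2$ whose $\Sigma_p$-projection is again $t$, using the synchronization identity to glue the two halves. Hence $t\in T(P\times O_2)|\Sigma_p$ and $O_1\leq O_2$. For the converse I would exhibit a distinguishing, ``most general recording'' client: given any $h\in H(O_1)$, build a deterministic straight-line program $P_h$ that issues exactly the call actions of $h$ in order, accepts the matching returns, and emits a dedicated program action of $\Sigma_p$ recording each call/return event as it occurs, so that the recorded word $w_h$ appears in $T(P_h\times O_1)|\Sigma_p$. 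Observational refinement then forces $w_h\in T(P_h\times O_2)|\Sigma_p$, and decoding $w_h$ shows $O_2$ admits $h$, i.e. $h\in H(O_2)$.

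For the second part I would combine the first equivalence with the definition of atomicity, $H(O_2)=\{h:\exists h'\in\mathit{Seq}.\ h\sqsubseteq h'\}$. Two small observations set it up: $\mathit{Seq}\subseteq H(O_2)$ because $h'\sqsubseteq h'$ holds for a sequential $h'$, and a sequential history lying in $H(O_2)$ must itself lie in $\mathit{Seq}$. Now if $O_1\sqsubseteq O_2$, each $h_1\in H(O_1)$ has a sequential witness $h_2\in H(O_2)$ with $h_1\sqsubseteq h_2$; by atomicity $h_2\in\mathit{Seq}$, so $h_1\sqsubseteq h_2\in\mathit{Seq}$ gives $h_1\in H(O_2)$, whence $H(O_1)\subseteq H(O_2)$ and, by Part~1, $O_1\leq O_2$. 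Conversely, if $H(O_1)\subseteq H(O_2)$, then each $h_1\in H(O_1)$ is in $H(O_2)$, so by atomicity there is $h'\in\mathit{Seq}\subseteq H(O_2)$ with $h_1\sqsubseteq h'$; since $h'$ is a sequential history of $O_2$ it witnesses linearizability, giving $O_1\sqsubseteq O_2$.

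The routine parts are the easy direction of Part~1 and all of Part~2, which are definition chasing together with reflexivity of $\sqsubseteq$. The main obstacle is the converse of Part~1, the recording-client construction: one must check that $P_h$ can be taken deterministic and can always make progress along $h$, and that its $\Sigma_p$-projection encodes $h$ without ambiguity. This is exactly where the object assumptions of the footnote are needed---because call actions cannot be disabled and cannot disable other actions, and return actions cannot enable actions, the client can drive the object along any target history and the recorded trace faithfully reconstructs it.
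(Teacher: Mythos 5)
Your proposal cannot be compared against ``the paper's own proof'' because there isn't one: Theorem~\ref{th:ref} is imported from the cited works \cite{DBLP:conf/popl/BouajjaniEEH15,DBLP:journals/tcs/FilipovicORY10}, which is why it carries a citation and why its hypotheses are relegated to a footnote. Judged on its own merits, your argument is correct and is essentially the standard one, adapted to this paper's LTS formalization. The gluing step in the easy direction is exactly the synchronized-product property (a word over the joint alphabet is a trace of $P\times O$ iff its projections to the two alphabets are traces of $P$ and of $O$, which is possible here because $\Sigma_p$ and the object-internal actions are disjoint and the two halves agree on $C\cup R$). Your recording client is a deterministic straight-line LTS, hence a legal program, and its structure forces any execution of $P_h\times O_2$ whose $\Sigma_p$-projection equals $w_h$ to have history exactly $h$ (every $C\cup R$ action synchronizes with $P_h$, and $P_h$ admits only the actions of $h$ in order), so $h\in H(O_2)$. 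Part~2 is indeed pure definition chasing given the paper's definition of atomicity as $H(O_2)=\{h:\exists h'\in\mathit{Seq}.\ h\sqsubseteq h'\}$; both of your auxiliary observations hold ($\mathit{Seq}\subseteq H(O_2)$ by reflexivity, and a sequential history in $H(O_2)$ must equal its own linearization, hence lies in $\mathit{Seq}$), and alternatively the first direction follows from transitivity of $\sqsubseteq$.

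The one claim I would push back on is your closing paragraph. In this paper's abstracted setting, the recording-client direction does \emph{not} actually use the footnote's assumptions: you never need to drive $O_2$ along $h$, because refinement hands you an execution of $P_h\times O_2$ whose $\Sigma_p$-projection is $w_h$, and the program's structure alone pins its history to $h$; likewise, for $O_1$ the required execution exists simply because $h\in H(O_1)$. Those assumptions (calls never disabled, calls/returns movable left/right) are what make the theorem true in the richer models of the cited papers---threaded programs, concrete lock-based atomic objects---where one must prove that the history set of an atomic implementation really is the $\sqsubseteq$-closure of $\mathit{Seq}$ and that program observations are insensitive to such reorderings. Here those facts are baked into the definitions (programs are arbitrary deterministic LTSs over $\Sigma_p\cup C\cup R$, and atomic objects are defined directly by their history sets), so your proof goes through without invoking them; attributing the load-bearing role to them mislocates the difficulty rather than creating a gap.
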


In the rest of the paper, since observational refinement and 
($C\cup R$)-refinement are equivalent, 
we will not make the distinction between the two and refer to both as \emph{refinement}.

 \section{Strong Observational Refinement}
\label{sec:sor}

\vspace{-1mm}
As discussed in Section~\ref{sec:motivation},
refinement does not preserve hyperproperties, which are properties of \emph{sets} of traces and not individual traces as in the case of safety properties.
In the following, we define a stronger notion of observational refinement that preserves such properties, using a notion of scheduler that is actually just a mechanism for resolving the non-determinism induced by internal actions, irrespectively of whether it comes from executing a set of parallel threads.

A \emph{scheduler} for a deterministic LTS $A=(Q,\Sigma,  s_0, \delta)$ over alphabet $\Sigma$ is a function $S:\Sigma^*\rightarrow 2^{\Sigma}$ which prescribes a possible set of
next actions to continue an execution based on a sequence of previous actions. A trace $\tau = a_0\cdot \ldots\cdot a_{k-1}$
is \emph{consistent} with a scheduler $S$ if $a_i\in S(a_0\cdot\ldots\cdot a_{i-1})$ for all $0\leq i<k$ (where by an abuse of notation, $a_0\cdot a_{-1}$ represents the empty sequence).
The set of executions of an LTS $A$ consistent with a scheduler $S$ can be defined using an LTS which is the product between $A$ and an LTS $A_S$ whose states are sequences in $\Sigma^*$ and the transitions link a state $\tau\in\Sigma^*$ to a state $\tau\cdot a\in \Sigma^*$ provided that $S(\tau)=a$ (such a transition is labeled by $a$). Let $T(A,S)$ denote the set of traces of $A$ consistent with $S$.
A scheduler is \emph{admitted} by $A$ if for every $k$, if $\tau = a_0\cdot \ldots\cdot a_{k-1}$ is a trace of $A$ consistent with $S$, then $S(a_0\cdot\ldots\cdot a_{k-1})$ is non-empty and every $a \in S(a_0\cdot\ldots\cdot a_{k-1})$ is enabled in the state $s_k$ with $s_0\xrightarrow{a_0\cdot\ldots\cdot a_{k-1}}_A s_k$.

A scheduler of an LTS $P \times O$ (the product of a program $P$ and an object $O$) is called \emph{deterministic} when it fixes in a unique way the actions of $O$  to continue an execution, i.e., for every sequence $\tau$, $S(\tau)\subseteq \Sigma_p$ or $|S(\tau)|=1$ (where $\Sigma_p$ is the set of program actions). When program actions represent outcomes of random choices made by the program, a deterministic scheduler can be used to model a strong adversary~\cite{DBLP:journals/dc/Aspnes03} which schedules threads depending on those outcomes.
An object $O_1$ strongly (observationally) refines an object $O_2$ if any deterministic schedule admitted by a program $P$ when using $O_1$ leads to exactly the same set of ``observations'' as a deterministic schedule admitted by $P$ were $O_2$ used instead. Formally,

\vspace{-1.5mm}
\begin{definition}\label{def:strong_ref}
  The object $O_1$ \emph{strongly (observationally) refines} $O_2$, written $O_1 \leq_s O_2$, iff
  \vspace{-2.5mm}
  \begin{align*}
    &\mbox{for every deterministic scheduler $S_1$ admitted by $P \times O_1$,}\\[-0.8mm]
    &\hspace{1cm}\mbox{there exists a deterministic scheduler $S_2$ admitted by $P \times O_2$,} \\[-0.8mm]
    &\hspace{2cm}\mbox{such that $T(P \times O_1,S_1)|\Sigma_p=T(P \times O_2,S_2)|\Sigma_p$} \\[-8mm]
  \end{align*}
  for all programs $P$ over alphabet $\Sigma_p\cup C \cup R$.

\vspace{-1.5mm}
\end{definition}

A \emph{hyperproperty} of a program $P$ over alphabet  $\Sigma_p\cup C \cup R$ is a set of sets of sequences over $\Sigma_p$. For instance, the hyperproperty discussed in the context of the program in Figure~\ref{fig:prog} is the set of all sets $T$ s.t.
\vspace{-2mm}
\begin{align*}
(\exists \tau\in T.\ \texttt{low1}(\tau)\neq \texttt{high}(\tau)) \land (\exists \tau\in T.\ \texttt{low2}(\tau)\neq \texttt{high}(\tau))\\[-8mm]
\end{align*}
where for any variable $x$, $x(t)$ is the value of $x$ at the end of trace $t$. A hyperproperty $\varphi$ is \emph{satisfied} by a program $P$ with an object $O$, written $P\times O\models\varphi$, if $T(P \times O,S)|\Sigma_p\in \varphi$ for every deterministic scheduler $S$.

\vspace{-1mm}
\begin{theorem}\label{th:sor_preservation}
If $O_1\leq_s O_2$, then $P\times O_2\models\varphi$ implies $P\times O_1\models\varphi$ for every hyperproperty $\varphi$ of $P$.
\vspace{-2mm}
\end{theorem}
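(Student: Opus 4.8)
The plan is to unfold the three relevant definitions---hyperproperty satisfaction, strong observational refinement, and the projection-equality the latter guarantees---and observe that they compose directly, with no intervening construction. First I would fix an arbitrary hyperproperty $\varphi$ of $P$ and assume both $O_1 \leq_s O_2$ and $P \times O_2 \models \varphi$. To establish $P \times O_1 \models \varphi$, the definition of satisfaction requires showing that for \emph{every} deterministic scheduler $S_1$ admitted by $P \times O_1$, the observation set $T(P \times O_1, S_1)|\Sigma_p$ belongs to $\varphi$.

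So the next step is to fix an arbitrary such $S_1$. Strong observational refinement then furnishes a deterministic scheduler $S_2$ admitted by $P \times O_2$ with $T(P \times O_1, S_1)|\Sigma_p = T(P \times O_2, S_2)|\Sigma_p$. Since $P \times O_2 \models \varphi$, instantiating the definition of satisfaction at this particular $S_2$ gives $T(P \times O_2, S_2)|\Sigma_p \in \varphi$. Chaining through the equality yields $T(P \times O_1, S_1)|\Sigma_p \in \varphi$, and because $S_1$ was arbitrary among the schedulers admitted by $P \times O_1$, the conclusion $P \times O_1 \models \varphi$ follows.

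There is no genuine obstacle here; the argument is a direct definitional composition, and the only care needed is in the quantifier handshake---the scheduler $S_2$ supplied by $\leq_s$ must be fed back into the \emph{universally} quantified satisfaction hypothesis for $O_2$, which is legitimate precisely because that hypothesis ranges over all deterministic schedulers. What I would emphasize in the writeup is \emph{why} the proof succeeds where plain observational refinement fails: strong refinement delivers set \emph{equality} of the two projected observation sets, not mere inclusion. Since a hyperproperty $\varphi$ is an arbitrary set of trace-sets, membership in $\varphi$ is not preserved under passing to subsets; an inclusion $T(P \times O_1, S_1)|\Sigma_p \subseteq T(P \times O_2, S_2)|\Sigma_p$---which is all that ordinary refinement provides, and only at the level of individual traces---would not let us transfer membership. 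Equality is exactly the ingredient that carries set membership from $O_2$ back to $O_1$, and this is the point the proof hinges on.
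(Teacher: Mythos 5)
Your proof is correct and follows exactly the same route as the paper's: fix a scheduler $S_1$ admitted by $P\times O_1$, obtain $S_2$ from the definition of $\leq_s$, instantiate the universally quantified satisfaction hypothesis for $O_2$ at $S_2$, and transfer membership in $\varphi$ through the set equality. Your closing remark about why equality (rather than mere inclusion) of projected observation sets is the essential ingredient is accurate commentary, though not part of the paper's argument.
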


This preservation result applies to \emph{probabilistic} hyperproperties as well, for instance when reasoning about randomized consensus protocols~\cite{DBLP:journals/dc/Aspnes03}. Since a deterministic scheduler fixes in a unique way the object's actions to continue an execution, probability distributions can be assigned only to actions which are internal to the program $P$. This holds for randomized protocols, where randomization is due to coin flip operations that are internal to the protocol and do not concern the behavior of the objects it invokes. Then, the probabilities associated with program actions can be encoded in the action names, thereby encoding probabilistic (hyper)properties as properties of (sets of) traces (see~\cite{DBLP:journals/jcs/ClarksonS10} for more details).

 \section{Characterizing Strong Refinement Using Forward Simulations}
\label{section:forward simulations}

In general, proving refinement between two LTSs relies on \emph{simulation relations} which roughly, are relations between the states of the two LTSs showing that one can mimic every step of the other one. \emph{Forward} simulations show that every outgoing transition from a given state can be mimicked by the other LTS while \emph{backward} simulations show the same for every incoming transition to a given state. Applying induction, forward simulations show that every trace of an LTS is admitted by the other LTS starting from initial states and advancing in a forward manner, while backward simulations consider the backward direction, from end states to initial states.
It has been shown that ($\Gamma$-)refinement is equivalent to the existence of a composition of forward and backward simulations, and to the existence of only a forward simulation provided that $A_2$ is ($\Gamma$-)deterministic~\cite{DBLP:journals/iandc/LynchV95}. In the following, we show that strong observational refinement is equivalent to the existence of a \emph{forward} simulation, which implies that refinement is strictly weaker than strong observational refinement (forward simulations do not suffice to establish refinement in general).

\vspace{-1.5mm}
\begin{definition}
Let $A_1=(Q_1,\Sigma_1,  s_0^1, \delta_1)$ and $A_2=(Q_2,\Sigma_2,  s_0^2, \delta_2)$ be two LTSs and $\Gamma$ an alphabet. A relation $F \subseteq Q_{1} \times Q_{2}$ is called a \emph{$\Gamma$-forward simulation} from $A_1$ to $A_2$ if{f} $(s_0^1,s_0^2)\in F$ and:
\vspace{-2mm}
\begin{itemize}
\item for all $s_1,s_1'\in Q_1$, $a\in \Sigma_1$, and $s_2\in Q_2$, such that $(s_1,a,s_1') \in \delta_1$ and $(s_1,s_2)\in F$, we have that there exists $s_2'\in Q_2$ such that $(s_1',s_2')\in F$ and $s_2 \xrightarrow{\tau}_{A_2} s_2'$ and $\tau| \Gamma=a| \Gamma$. 
\end{itemize}
\vspace{-3.5mm}
\end{definition}

A $\Gamma$-forward simulation states that every step of $A_1$ is simulated by a sequence of steps of $A_2$ (this sequence can be empty to allow for stuttering). Since it should imply that $A_1$ $\Gamma$-refines $A_2$, every step of $A_1$ labeled by an observable action $a\in \Gamma$ should be simulated by a sequence of steps of $A_2$ where exactly one transition is labeled by $a$ and all the other transitions are labeled by non-observable actions (this is implied by $\tau| \Gamma=a| \Gamma$). Also, every internal step of $A_1$ should be simulated by a sequence of internal steps of $A_2$. 

An instantiation of forward simulations are linearizability proofs using the so-called ``fixed linearization points''. Linearizability of a history can be proved by showing that each invocation can be seen as happening at some point, called linearization point, occurring somewhere between the call and return actions of that invocation. Then, the linearization points are \emph{fixed} when they are mapped to a certain fixed set of statements (usually, one statement per method).
This defines a mapping between steps of a concrete implementation and steps of an atomic object, i.e., those fixed statements map to linearization point actions in the atomic object and all the other statements correspond to stuttering steps of the atomic object, thereby defining a forward simulation between the two. As a side remark, backward simulation is necessary to prove linearizability w.r.t.\ atomic specifications, when linearization points depend on future steps in the execution, the Herlihy\&Wing queue~\cite{DBLP:journals/toplas/HerlihyW90} being a classic  example (Schellhorn et al.~\cite{DBLP:conf/cav/SchellhornWD12} present such a proof).

The easier direction is showing that forward simulations imply strong refinement. A forward simulation from $O_1$ to $O_2$ can be used to simulate any scheduler $S_1$ of a program $P$ using $O_1$ by a scheduler of the same program $P$ when using $O_2$. Program actions will be replayed exactly as in $S_1$ while the actions of $O_2$ simulating actions of $O_1$ can be chosen according to the forward simulation (see Appendix~\ref{app:suff}).

\vspace{-1mm}
\begin{lemma}\label{lem:fsim1}
If there exists a $(C\cup R)$-forward simulation from $O_1$ to $O_2$, then $O_1\leq_s O_2$.
\vspace{-1mm}
\end{lemma}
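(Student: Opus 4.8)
The plan is to turn the given $(C\cup R)$-forward simulation $F$ from $O_1$ to $O_2$ into a translation of schedulers. First I would lift $F$ to the products: define $F'\subseteq Q_{P\times O_1}\times Q_{P\times O_2}$ by $((p,s_1),(p,s_2))\in F'$ iff $(s_1,s_2)\in F$, pairing identical program states. I would check that $F'$ is a forward simulation that preserves the projection onto $\Sigma_p\cup C\cup R$, by the three cases for a step of $P\times O_1$. A program action $b\in\Sigma_p$ is replayed verbatim (the object stays put and $F$ still relates the states). An internal object action is replaced by the witness sequence $\tau$ of $F$, which consists only of $O_2$-internal actions since $\tau|(C\cup R)=\epsilon$. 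A call/return action $a\in C\cup R$ is replaced by the witness $\tau$ with $\tau|(C\cup R)=a$, i.e.\ a block of $O_2$-internal actions surrounding a single synchronized $a$, which $P$ can perform because the original step was synchronized and $P$'s state is unchanged until $a$ is emitted. This already yields ordinary refinement, but strong refinement is a per-scheduler statement, so these witnesses are the raw material rather than the conclusion.

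To make the translation deterministic I would fix, for every $(s_1,a,s_1')\in\delta_1$ and every $s_2$ with $(s_1,s_2)\in F$, one witness sequence $\mathrm{wit}(s_1,a,s_1',s_2)$ and one target $s_2'$. Then, given a deterministic scheduler $S_1$ admitted by $P\times O_1$, I would define $S_2$ for $P\times O_2$ by induction along the traces consistent with it, maintaining for each such $\sigma$ a decoding $\mathrm{dec}(\sigma)=\tau$ (a trace of $P\times O_1$ consistent with $S_1$) together with the still-unscheduled suffix $w$ of the current witness. At a boundary ($w=\epsilon$) I consult $S_1(\tau)$: if $S_1(\tau)\subseteq\Sigma_p$ I set $S_2(\sigma)=S_1(\tau)$, replaying the same (possibly several) program actions; if $S_1(\tau)=\{b\}$ with $b\in\Sigma_p$ likewise; if $S_1(\tau)=\{a\}$ with $a$ an object action I begin emitting $\mathrm{wit}(\cdot)$ one action at a time, each a singleton so determinism is preserved, committing $a$ to $\mathrm{dec}$ only when the witness is exhausted. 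Since $P\times O_2$ is deterministic and all witness choices are fixed, $\mathrm{dec}(\sigma)$ and the phase $w$ are functions of $\sigma$ alone, so $S_2$ is well defined (I set $S_2=\emptyset$ off the consistent traces), and by construction $S_2$ never offers a choice among object actions, hence is deterministic.

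I would then establish two invariants for consistent $\sigma$: that $\mathrm{dec}(\sigma)$ is consistent with $S_1$, and that $\sigma|\Sigma_p=\mathrm{dec}(\sigma)|\Sigma_p$, the latter because every witness action is either $O_2$-internal or the $C\cup R$ action $a$, none of which lies in $\Sigma_p$. The inclusion $T(P\times O_2,S_2)|\Sigma_p\subseteq T(P\times O_1,S_1)|\Sigma_p$ is then immediate, and the reverse inclusion follows by encoding each $S_1$-consistent $\tau$ into its boundary trace $\mathrm{enc}(\tau)$ (again by induction on $|\tau|$, using the fixed witnesses), giving the required equality $T(P\times O_1,S_1)|\Sigma_p=T(P\times O_2,S_2)|\Sigma_p$. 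As $P$ and $S_1$ are arbitrary, this is exactly $O_1\leq_s O_2$.

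The main obstacle is not the simulation lifting but the scheduler bookkeeping: arranging that $S_2$ is well defined as a function of the $O_2$-trace while being simultaneously deterministic, admitted, and matching the $\Sigma_p$-trace set \emph{exactly} (both inclusions, not merely containment as plain refinement would give). The most delicate point is admittance under stuttering: when an internal $O_1$ step has an empty witness, $S_2$ must fast-forward, advancing $\mathrm{dec}$ through such steps and re-consulting $S_1$, and I must ensure this never loops forever, i.e.\ that there is no infinite chain of internal $O_1$ steps all simulated by empty $O_2$ witnesses, so that $S_2(\sigma)$ stays nonempty and enabled as admittance demands; I would discharge this from the structure of the object LTSs (finiteness of internal stuttering). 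The enabling checks for the emitted program and synchronized call/return actions are routine, since program-action enabledness depends only on the shared program state and the synchronized $a$ is performable at the moment it is emitted because $P$'s state is still the one reached before the simulated step.
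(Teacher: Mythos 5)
Your proposal follows essentially the same route as the paper's proof: given a deterministic scheduler $S_1$ for $P\times O_1$, you build $S_2$ for $P\times O_2$ by replaying program actions verbatim and emitting, for each object action chosen by $S_1$, a simulating sequence supplied by the forward simulation, dispensed one action at a time; your fixed-witness and decoding bookkeeping is precisely what the paper waves off with ``the definition of $S_2$ can be adapted easily such that this sequence of steps is performed one by one.'' The one point where you go beyond the paper is the admittance-under-stuttering worry, and you are right to raise it: if $S_1$ could schedule an infinite chain of internal $O_1$ steps all simulated by empty witnesses, $S_2$ would eventually have nothing non-empty and enabled to prescribe, yet your discharge of this via ``finiteness of internal stuttering'' is an assumption the paper's LTS model of objects does not supply (nothing in the definition of an object forbids unbounded internal loops mapped to stuttering). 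Since the paper's own proof is entirely silent on this corner case, your attempt is, if anything, more careful than the original on the only delicate point; just be aware that the finiteness assumption you invoke would need to be stated explicitly rather than derived ``from the structure of the object LTSs.''
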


We now prove our key technical result: strong observational refinement (from $O_1$ to $O_2$) implies the existence of a $(C\cup R)$-forward simulation (from $O_1$ to $O_2$). Since the latter implies refinement, a corollary of this result is that strong observational refinement implies observational refinement.
Thus, we define a program $P$ which corresponds to the most general client (of $O_1$) and which uses particular program actions to guess the possible continuations of a given execution with call and return actions. Then, we define a scheduler $S_1$ which ensures that the executions of $P$ with $O_1$ are consistent with the guesses made by the program.  By strong observational refinement, there exists a scheduler $S_2$ such that $P$ produces the same sequences of ``guess'' actions and call/return actions when using $O_2$ and constrained by $S_2$ as when using $O_1$ and constrained by $S_1$ (the preservation of call/return actions is not guaranteed explicitly by strong observational refinement, but it can be enforced using additional program actions used to record them).
If $\Gamma$ is the union of the set of ``guess'' actions and the set of call/return actions, then the program $P$ used in conjunction with $O_2$ and constrained by the scheduler $S_2$ is $\Gamma$-deterministic. Therefore, there exists a forward simulation between the two variations of $P$. Because the program states are disjoint from the object states, this forward simulation between programs leads to a forward simulation between  objects.

\vspace{-1.5mm}
\begin{lemma}\label{lem:fsim1}
If $O_1\leq_s O_2$, then there exists a $(C\cup R)$-forward simulation from $O_1$ to $O_2$.
\vspace{-2mm}
\end{lemma}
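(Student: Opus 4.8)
The plan is to manufacture a \emph{single} universal client whose nondeterminism is entirely exposed as program actions, run the strong-refinement hypothesis on it, and then invoke the Lynch--Vaandrager characterization~\cite{DBLP:journals/iandc/LynchV95} to turn the resulting trace inclusion into a forward simulation. The difficulty is that $O_1 \leq_s O_2$ is a statement about whole schedulers and whole traces, whereas a $(C\cup R)$-forward simulation is a local, step-by-step relation on states; the whole construction is designed to bridge this gap. First I would fix a program $P$ that behaves as the most general client of $O_1$, but is augmented so that before committing to any next object action it emits a program ``guess'' action $\guess{a}$ announcing that action, and after every call/return $a$ it emits a ``record'' action $\record{a}$. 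Since $\guess{a}$ and $\record{a}$ lie in $\Sigma_p$, they externalize \emph{all} of the object's branching into the program: a deterministic scheduler is permitted to leave several program actions enabled, so a single deterministic scheduler $S_1$ that merely forces $O_1$ to follow each announced guess already realizes \emph{every} execution of $O_1$, while remaining legitimately deterministic on object actions.

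Next I would feed $S_1$ into the hypothesis $O_1 \leq_s O_2$ to obtain a deterministic scheduler $S_2$ for $P \times O_2$ with $T(P\times O_1,S_1)|\Sigma_p = T(P\times O_2,S_2)|\Sigma_p$. Because the guess and record actions are program actions, matching the $\Sigma_p$-projections forces the two runs to share the same sequences of $\guess{a}$ and $\record{a}$ actions, hence the same call/return sequences; this is exactly why the record actions are needed, since strong refinement does not preserve $C\cup R$ directly. Setting $\Gamma = \{\guess{a}\} \cup C \cup R$, the key observation is that $P\times O_2$ constrained by $S_2$ is $\Gamma$-deterministic: each guess pins down the next observable action, and between two observable actions the deterministic scheduler $S_2$ fixes the object's internal steps uniquely. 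The constrained products also $\Gamma$-refine one another because their $\Gamma$-traces coincide. With $\Gamma$-refinement into a $\Gamma$-deterministic target, the Lynch--Vaandrager result yields a $\Gamma$-forward simulation $F'$ between the two constrained products.

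Finally I would descend from programs to objects. Since $\Sigma_p$ is disjoint from the object alphabet and the same $P$ appears in both products, $F'$ can be taken to relate configurations that agree on the program component, so reading off the object components defines $F \subseteq Q_1\times Q_2$. Guess and record steps have empty $(C\cup R)$-projection, so they collapse to stuttering at the object level, and $F'$ thereby restricts to a $(C\cup R)$-forward simulation from $O_1$ to $O_2$; every transition of $O_1$ is covered because $S_1$ can drive $O_1$ along it via an appropriate guess. I expect the main obstacle to be exactly the two load-bearing design choices: arranging $P$ and $S_1$ so that $S_1$ simultaneously exposes every step of $O_1$ and stays a valid deterministic scheduler, and then proving that the $S_2$-constrained product is genuinely $\Gamma$-deterministic, since $\Gamma$-determinism of the target is the precise hypothesis that upgrades trace inclusion to a \emph{forward} (rather than backward) simulation. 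By comparison, the projection to objects is routine given disjointness of states, though one must still check that the matching condition is respected for internal steps of $O_1$, which should map to internal step-sequences of $O_2$.
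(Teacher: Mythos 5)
Your proposal follows the paper's proof in all essentials: a universal client with guess and record actions in $\Sigma_p$, a deterministic scheduler $S_1$ that converts the object's branching into program branching, the hypothesis $O_1\leq_s O_2$ to obtain $S_2$, determinism of the $S_2$-constrained product, the Lynch--Vaandrager theorem to extract a forward simulation, and a projection back to object states. The one genuine difference is \emph{what gets guessed}. You guess the next action $\guess{a}$, so your $S_1$ realizes every execution of $O_1$, and the final composition only needs the fact that a deterministic LTS forward-simulates its own unrolling. The paper instead guesses the set $H=\mathit{after}_1(s')$ of call/return continuations of the intended post-state, and its $S_1$ schedules just one representative action per guessed set (to keep the scheduler deterministic); consequently $O_1[S_1]$ does \emph{not} contain all executions of $O_1$, and the paper needs an extra lemma --- that relating states with equal $\mathit{after}_1$-images is a $(C\cup R)$-forward simulation $F_1$ from $O_1$ to $O_1[S_1]$ --- which it then composes with the projected simulation. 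Nothing in this direction of the equivalence requires the coarser guesses, so your variant is sound and actually eliminates that step.

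Where you must be careful is the step you dispatch in half a sentence. You invoke Lynch--Vaandrager with $\Gamma=\{\guess{a}: a\in\Sigma_1\}\cup C\cup R$, whose premise is $\Gamma$-trace inclusion between the constrained products; but strong refinement hands you only equality of $\Sigma_p$-projections, and ``same guesses and records, hence the same call/return sequences'' is exactly the crux of the lemma, not a triviality. Guesses and records are program actions: if the client were the fully permissive single-state process, a scheduler $S_2$ could emit $\record{a}$ without $O_2$ ever performing $a$, or place $O_2$'s calls and returns elsewhere, and still match every $\Sigma_p$-projection. The inference is valid only because your client is \emph{structured}, so that $\record{a}$ is enabled solely in the program state reached by synchronizing on $a$, and a call/return can be synchronized only immediately after its guess; then in every run of $P\times O_2$, under any admitted scheduler, each call/return is sandwiched between its guess and its record, and $\Sigma_p$-equality does lift to $\Gamma$-equality. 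Spell this out --- it is the load-bearing point. (The paper orders things differently: it applies Lynch--Vaandrager with $\Gamma=\Sigma_p$, where the inclusion is immediate, and only afterwards argues that the resulting simulation preserves $C\cup R$ ``because of the record actions''.) Relatedly, you singled out $\Gamma$-determinism of $P[O_2,S_2]$ as a main obstacle, but that part is easy: a deterministic scheduler branches only on program actions, so the $\Sigma_p$-projection (hence the $\Gamma$-projection) determines the run. Finally, in the projection step, restrict the relation to configurations where the program is about to guess: from a configuration where $\guess{b}$ has just been emitted only $b$ is schedulable, so the simulation condition for an arbitrary step of $O_1$ can only be verified from such ``ready'' configurations.
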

\begin{proof}
Let $\Sigma_p=\{ \record{a}, \guess{H}: a\in C\cup R, H\subseteq (C\cup R)^*\}$ be a set of program actions for recording a call/return action $a$ ($\record{a}$) or guessing a set $H$ of possible continuations with sequences of call/return actions ($\guess{H}$). We define a program $P$ with a single state and self-loop transitions labeled by all symbols in $\Sigma_p\cup C\cup R$, i.e., $P=(\{s_0\},\Sigma_p\cup C\cup R,  s_0, \delta)$ where $\delta(s_0,\alpha,s_0)$ for all $\alpha\in \Sigma_p\cup C\cup R$.

We define a deterministic scheduler $S_1$ which ensures that the guesses made by $P$ when using $O_1$ are correct, and that the call/return actions are tracked correctly using $\recordnoarg$ actions.
To ensure the correctness of guesses, we define a mapping $\mathit{after}_1:Q_1\rightarrow 2^{(C\cup R)^*}$ which associates every state $s$ with the set of call/return sequences admitted from $s$, i.e.,
$
\mathit{after}_1(s)=\{\sigma: \sigma\in (C\cup R)^*, \exists s'.\ s\xrightarrow{\tau}_{O_1} s'\land \tau|(C\cup R)=\sigma\}. 
$

Let $S_1$ be a deterministic scheduler such that for every $a_0,\ldots,a_{k-1}\in\Sigma_1$ and $k\geq 0$,
\vspace{-3mm}
\begin{align*}
S_1(a_0\cdot \ldots\cdot a_{k-1}) & =\record{a_{k-1}}\mbox{ if $a_{k-1}\in C\cup R$ and $k\geq 1$} \\[-0.5mm]
S_1(a_0\cdot \ldots\cdot a_{k-1}[\cdot \record{a_{k-1}}]) & =\{ \guess{H}: \exists a.\ s_0^1\xrightarrow{a_0\cdot\ldots\cdot a_{k-1}| \Sigma_1}_{O_1} s\xrightarrow{a}_{O_1} s'
\mbox{ and }H=\mathit{after}_1(s')\}\\[-0.5mm]
&\hspace{5cm}\mbox{ if $a_{k-1}\not\in \Sigma_p$} \\
S_1(a_0\cdot \ldots\cdot a_{k-1}\cdot \guess{H} ) & = a, \mbox{ for some $a\in\Sigma_1$ such that $s_0^1\xrightarrow{a_0\cdot\ldots\cdot a_{k-1}\cdot a| \Sigma_1}_{O_1} s$ and $H=\mathit{after}_1(s)$} \\[-8mm]
\end{align*}
Informally, the first rule enforces that every call/return action $a$ is followed by a program action $\record{a}$.
The second rule ensures that $S_1$ is permissive enough, i.e., it allows all the successors of the current object state that have different $\mathit{after}_1$ images (for a sequence $\sigma$, $\sigma[\cdot a]$ denotes a sequence where the character $a$ is optional). The third rule ensures that every $\guess{H}$ is followed by an action leading to an object state $s$ with $H=\mathit{after}_1(s)$. Collectively, these last two cases ensure that every action $a$ of $O_1$ is  preceded by a $\guess{H}$ program action where $H$ is the set of call/return sequences admitted from the post-state of $a$.

Although $S_1$ does not admit all the executions of $O_1$ (because of the arbitrary choice of $a$ in the third case above), we show that the set of executions it admits simulate all the executions of $O_1$: let $O_1[S_1]$ be an LTS representing the set of executions of $O_1$ consistent with $S_1$ (obtained from the set of executions of $P$ consistent with $S_1$ by projecting out the program state and actions). We show that the relation $F_1$ between states of $O_1$ and $O_1[S_1]$, respectively, defined
by $(s,s')\in F_1$ iff $\mathit{after}_1(s)=\mathit{after}_1(s')$,
is a $(C\cup R)$-forward simulation from $O_1$ to $O_1[S_1]$. The fact that it relates the initial object states $s_0^1$ and $s_0^1$ is trivial. Now, let $s,s_1\in Q_1$ and $a\in \Sigma_1$ such that $(s,a,s_1) \in \delta_1$ and $(s,s')\in F_1$. Using a simple induction on the length of executions, it can be shown that there exists a state $s_1'$ with $\mathit{after}_1(s_1)=\mathit{after}_1(s_1')$ such that $(s',b,s_1')$ for some action $b$. If $a\in C\cup R$, then $b=a$ because otherwise, the continuations with call/return actions admitted from $s_1$ will be different from those admitted from $s_1'$ (for instance, if $a$ is a call action and $b$ is an internal action, then the matching return action will be eventually enabled in executions starting from $s_1$ but not from $s_1'$, at least not before $a$ occurs). For the same reason, if $a$ is an internal action, then $b$ is also an internal action. This concludes the proof that $F_1$ is a forward simulation.

Since $O_1\leq_s O_2$, there exists a scheduler $S_2$ such that $T(P \times O_1,S_1)|\Sigma_p=T(P \times O_2,S_2)|\Sigma_p$.
Let $P[O_2,S_2]$ denote the LTS representation of the set of executions of $P$ with $O_2$ and consistent with $S_2$ (explained in Section~\ref{sec:sor}). It can be easily seen that $P[O_2,S_2]$ is $\Sigma_p$-deterministic (the interleaving of a sequence of $\Sigma_p$ actions with internal actions of $O_2$ is uniquely determined by $S_2$ because it is a deterministic scheduler). Since $T(P[O_1,S_1])|\Sigma_p\subseteq T(P[O_2,S_2])|\Sigma_p$,\footnote{Note that $T(P \times O_i,S_i)$ and $T(P[O_i,S_i])$ with $i\in \{1,2\}$ denote exactly the same set of traces.} we get that there exists a $\Sigma_p$-forward simulation $F_{S_1,S_2}$ from  $P[O_1,S_1]$ to $P[O_2,S_2]$. Such a forward simulation defines a relation between states of $O_1$ and $O_2$, respectively, by removing the program state, i.e., $s_1$ and $s_2$ are related whenever $((s_0,s_1),(s_0,s_2))\in F_{S_1,S_2}$. For simplicity, this relation is denoted by $F_{S_1,S_2}$ as well. Because of the $\record{a}$ actions in $\Sigma_p$, we get that $F_{S_1,S_2}$ is a $(C\cup R)$-forward simulation from $O_1[S_1]$ to $O_2$. It is easy to check that $F_1\circ F_{S_1,S_2}$ (where $\circ$ is the usual composition of relations) is a $(C\cup R)$-forward simulation from $O_1$ to $O_2$.
\vspace{-2.5mm}
\end{proof}

The two lemmas above imply that:

\vspace{-2mm}
\begin{theorem}\label{th:strong-ref}
$O_1\leq_s O_2$ iff there exists a $(C\cup R)$-forward simulation from $O_1$ to $O_2$.
\vspace{-1.5mm}
\end{theorem}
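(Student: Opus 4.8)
The plan is to observe that the statement is exactly the biconditional obtained by conjoining the two lemmas immediately preceding it, so the proof requires no new construction---only the assembly of the two implications. The right-to-left direction, that the existence of a $(C\cup R)$-forward simulation from $O_1$ to $O_2$ entails $O_1\leq_s O_2$, is the content of the first lemma; the left-to-right direction, that $O_1\leq_s O_2$ yields such a forward simulation, is the second lemma. Since the relation whose existence one direction concludes is precisely the hypothesis the other direction assumes, the two implications combine into the desired ``iff'' with nothing further to check.

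Were I to prove the theorem from scratch rather than cite the lemmas, I would still split along these two directions and invest essentially all the effort in the forward direction (strong refinement $\Rightarrow$ forward simulation). The easy direction amounts to replaying any deterministic scheduler $S_1$ for $P\times O_1$ against $O_2$: program actions are scheduled identically, and each object step of $O_1$ is expanded, via the simulation, into the sequence of $O_2$-steps it licenses, yielding a deterministic $S_2$ with $T(P\times O_1,S_1)|\Sigma_p=T(P\times O_2,S_2)|\Sigma_p$. The delicate part is manufacturing a forward simulation out of the purely semantic guarantee $\leq_s$, and the key idea---already carried out in the second lemma---is to instrument the most general client with $\guess{H}$ and $\record{a}$ actions so that the observable $\Sigma_p$-trace encodes both the call/return history and the set $\mathit{after}_1(\cdot)$ of admissible continuations from the current object state. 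Choosing $S_1$ to force these guesses to be correct makes $P[O_2,S_2]$ $\Sigma_p$-deterministic, at which point the Lynch--Vaandrager result (a $\Gamma$-refinement into a $\Gamma$-deterministic target is witnessed by a forward simulation) supplies a $\Sigma_p$-forward simulation; projecting out the program state and composing with the auxiliary simulation $F_1$ then delivers the $(C\cup R)$-forward simulation between objects.

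The main obstacle, therefore, is not in the theorem itself but in the second lemma it packages: ensuring that the guessing mechanism exposes enough of each object's future behavior that scheduler-determinism on the abstract side coincides with $(C\cup R)$-determinism, so that a forward---rather than backward---simulation can be recovered. Granting the two lemmas, the present statement is immediate.
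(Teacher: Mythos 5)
Your proposal is correct and matches the paper exactly: the paper derives this theorem by simply combining the two preceding lemmas (forward simulation $\Rightarrow$ strong refinement, and strong refinement $\Rightarrow$ forward simulation), with no additional argument. Your summary of how each lemma is proved---scheduler replay for the easy direction, and the $\guess{H}$/$\record{a}$ instrumentation plus $\Sigma_p$-determinism and the Lynch--Vaandrager result for the hard direction---also faithfully reflects the paper's constructions.
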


The fact that forward simulations are \emph{necessary} for strong refinement allows to derive in a simple way compositional methods for proving strong refinement.
In the following we consider the case of \emph{composed} objects defined as a product of a fixed set of objects, and \emph{parametrized} objects defined from a set of ``base'' objects which are considered as parameters.

We show that strong refinement is a \emph{local} property, i.e., it holds for composed objects if and only if it holds for individual objects in this composition. As usual, we consider compositions of objects with disjoint states and sets of actions. Indeed, any forward simulation between composed objects can be ``projected'' to a set of forward simulations that hold between individual objects, and vice versa. We state this result for compositions of two objects, the extension to an arbitrary number of objects is obvious.

\vspace{-1.5mm}
\begin{theorem}
Let $O_1$ and $O_2$, resp., $O_1'$ and $O_2'$, be two objects over an alphabet $\Sigma$, resp., $\Sigma'$, such that $\Sigma\cap \Sigma'=\emptyset$. Then, $O_1\times O_1'\leq_s O_2\times O_2'$ iff $O_1\leq_s O_2$ and $O_1'\leq_s O_2'$.
\vspace{-1.5mm}
\end{theorem}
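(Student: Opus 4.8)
The plan is to reduce the statement to a purely combinatorial fact about forward simulations, using the characterization of strong refinement established in Theorem~\ref{th:strong-ref}. Write $\Gamma = C \cup R$ for the call/return actions of the composite objects. Since $\Sigma \cap \Sigma' = \emptyset$, these split disjointly as $\Gamma = \Gamma_1 \uplus \Gamma_1'$, where $\Gamma_1 = \Gamma \cap \Sigma$ are the call/return actions of $O_1$ and $O_2$, and $\Gamma_1' = \Gamma \cap \Sigma'$ those of $O_1'$ and $O_2'$. By Theorem~\ref{th:strong-ref} it suffices to show that there is a $\Gamma$-forward simulation from $O_1 \times O_1'$ to $O_2 \times O_2'$ if and only if there are a $\Gamma_1$-forward simulation from $O_1$ to $O_2$ and a $\Gamma_1'$-forward simulation from $O_1'$ to $O_2'$. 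Throughout I use that, because the alphabets are disjoint, every transition of a product moves exactly one component, and the $\Sigma$-labelled (resp.\ $\Sigma'$-labelled) transitions of the product are precisely the transitions of the first (resp.\ second) component with the other held fixed.

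For the \emph{if} direction, given a $\Gamma_1$-forward simulation $F$ and a $\Gamma_1'$-forward simulation $F'$, I would take the product relation $G = \{((s_1,s_1'),(s_2,s_2')) : (s_1,s_2)\in F,\ (s_1',s_2')\in F'\}$. The initial states are related because $F$ and $F'$ relate the respective initial states. For the inductive step, consider a product transition $((s_1,s_1'),a,(t_1,t_1'))$ with $((s_1,s_1'),(s_2,s_2'))\in G$, and split on whether $a\in\Sigma$ or $a\in\Sigma'$. In the first case $t_1'=s_1'$ and $(s_1,a,t_1)\in\delta_1$, so $F$ supplies a run $s_2 \xrightarrow{\tau}_{O_2} t_2$ with $(t_1,t_2)\in F$ and $\tau|\Gamma_1 = a|\Gamma_1$; lifting $\tau$ to the product with the second component idle gives $(s_2,s_2')\xrightarrow{\tau}_{O_2\times O_2'}(t_2,s_2')$, and since $\tau$ contains no $\Sigma'$-actions one checks $\tau|\Gamma = a|\Gamma$, while $((t_1,s_1'),(t_2,s_2'))\in G$. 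The case $a\in\Sigma'$ is symmetric. Hence $G$ is a $\Gamma$-forward simulation, and $O_1\times O_1' \leq_s O_2\times O_2'$ follows from Theorem~\ref{th:strong-ref}.

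The \emph{only if} direction is the main obstacle. Given a $\Gamma$-forward simulation $G$ from $O_1\times O_1'$ to $O_2\times O_2'$, I would define $F = \{(s_1,s_2) : \exists s_1',s_2'.\ ((s_1,s_1'),(s_2,s_2'))\in G\}$ and symmetrically $F'$, and verify that $F$ is a $\Gamma_1$-forward simulation (the argument for $F'$ is identical). The initial states are related since $G$ relates the product initial states. For a transition $(s_1,a,t_1)\in\delta_1$ with $a\in\Sigma$ and $(s_1,s_2)\in F$ witnessed by some $s_1',s_2'$, I embed it as the product transition $((s_1,s_1'),a,(t_1,s_1'))$, obtaining from $G$ a simulating run $(s_2,s_2')\xrightarrow{\tau}_{O_2\times O_2'}(t_2,t_2')$ with $((t_1,s_1'),(t_2,t_2'))\in G$ and $\tau|\Gamma=a|\Gamma$. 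The key step is to project: set $\tau' = \tau|\Sigma$, so that $s_2\xrightarrow{\tau'}_{O_2} t_2$ (the $\Sigma'$-steps of $\tau$ only touch the second component), and $(t_1,t_2)\in F$. The delicate point---where disjointness of $\Sigma$ and $\Sigma'$ is essential---is checking that the matching condition survives projection: since $\Gamma_1\subseteq\Sigma$ we have $\tau'|\Gamma_1 = \tau|\Gamma_1$, and from $\tau|\Gamma = a|\Gamma$ together with $a\in\Gamma_1$ (or $a\notin\Gamma$ in the internal case) one deduces $\tau|\Gamma_1 = a|\Gamma_1$; in particular the single observable action in the simulating run, if any, must be $a$ itself rather than a stray call/return of $O_2'$, which is exactly what disjointness of the two call/return alphabets guarantees. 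This yields $\tau'|\Gamma_1 = a|\Gamma_1$, completing the verification that $F$ is a $\Gamma_1$-forward simulation. Applying Theorem~\ref{th:strong-ref} to $F$ and $F'$ then gives $O_1\leq_s O_2$ and $O_1'\leq_s O_2'$.
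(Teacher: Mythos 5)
Your proof is correct and takes essentially the same route as the paper: the paper also reduces the statement via Theorem~\ref{th:strong-ref} to forward simulations, observing that a $(C\cup R)$-forward simulation between the composed objects can be ``projected'' onto forward simulations between the individual objects, and conversely that component simulations combine into a product simulation. Your write-up merely supplies the details (lifting and projecting runs, and checking that the matching condition $\tau|\Gamma = a|\Gamma$ survives under disjointness of $\Sigma$ and $\Sigma'$) that the paper leaves implicit.
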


Next, we consider the case of parametrized objects whose implementation is parametrized by a set of base objects, e.g., snapshot objects defined from a set of atomic registers. We show that if the parametrized object is a strong refinement of an abstract specification $S$ assuming that the base objects behave according to their own abstract specifications $S_i$, then instantiating any base object with an implementation that is a strong refinement of $S_i$ leads to an object which remains a strong refinement of $S$. Assuming for simplicity only one base object, a parametrized object $O$ can be formally defined as a product $O=S_1\times C$ where $S_1$ is the base object's specification and $C$ is the context in which this object is used to derive the implementation of $O$\,\footnote{For a parametrized object $O=S_1\times C$, the alphabets of $S_1$ and $C$ share the call/return actions of $S_1$ (the base object) and the alphabet of $C$ contains the call/return actions of $O$. This is different from the the composition of two objects $O_1\times O_1'$ where the alphabets of $O_1$ and $O_1'$ are disjoint.}. To distinguish parametrization from composition, we use $O[S_1]$ to denote an object parametrized by a base object $S_1$. The next result is an immediate consequence of the fact that the forward simulation admitted by the base object can be composed\,\footnote{Here, we refer to classical composition of relations.} with the one admitted by the parametrized object (assuming base object's specification) to derive a forward simulation for the instantiation.

\vspace{-1.5mm}
\begin{theorem}
If $O[S_1]\leq_s S$ and $O_1\leq_s S_1$, then $O[O_1]\leq_s S$.
\vspace{-1.5mm}
\end{theorem}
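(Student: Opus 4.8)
The plan is to reduce the claim to a statement about forward simulations via \Cref{th:strong-ref} and then compose the two simulations supplied by the hypotheses. By the direction of \Cref{th:strong-ref} that turns strong refinement into a forward simulation, $O[S_1]\leq_s S$ yields a $(C\cup R)$-forward simulation $F$ from $O[S_1]=S_1\times C$ to $S$, and $O_1\leq_s S_1$ yields a $(C\cup R)$-forward simulation $G$ from $O_1$ to $S_1$. The ``$C\cup R$'' relevant to $G$ is the set $\Lambda$ of call/return actions of the \emph{base} object, which—by the footnote describing parametrized objects—is exactly the alphabet shared between $S_1$ (resp.\ $O_1$) and the context $C$; the outer interface $\Gamma=C\cup R$ of $O$ and $S$ is carried by $C$ and is disjoint from $\Lambda$. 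The goal is to build a $\Gamma$-forward simulation from $O[O_1]=O_1\times C$ to $S$ and invoke the converse direction of \Cref{th:strong-ref}.

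The core step is to lift $G$ to a $\Gamma$-forward simulation $H$ from $O_1\times C$ to $S_1\times C$ by keeping the context fixed: put $((q,c),(q',c'))\in H$ iff $c=c'$ and $(q,q')\in G$. Checking the simulation condition splits by the owner of a step of $O_1\times C$. A step internal to $O_1$ (an action outside $\Lambda$) is matched by the sequence of base-internal steps that $G$ provides, with $C$ stuttering; a step owned by $C$ (an action of $C$ outside $\Lambda$, including every outer call/return in $\Gamma$) is matched by the identical $C$-step with $S_1$ stuttering. The only delicate case is a synchronized step on some $a\in\Lambda$, i.e.\ a base call/return, where both components move. Because $G$ is a $\Lambda$-forward simulation, it simulates this $a$ by a sequence $\tau$ of $S_1$ with $\tau|\Lambda=a$, so $\tau$ contains exactly one occurrence of $a$ and otherwise only base-internal actions. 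In the product I replay $\tau$ by letting $S_1$ perform its leading internal actions with $C$ idle, synchronizing $C$'s $a$-step with the unique $a$ of $\tau$, and then letting $S_1$ finish its trailing internal actions with $C$ idle. The end state lies in $H$, and since every action produced ($a$ together with base-internals) lies outside $\Gamma$, the matching sequence $\tau'$ satisfies $\tau'|\Gamma=\epsilon=a|\Gamma$; the $\Gamma$-labeling condition in the other two cases is immediate.

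Finally, I compose: $H\circ F$ (ordinary composition of relations) is a $\Gamma$-forward simulation from $O[O_1]$ to $S$, using the standard fact that forward simulations compose—exactly the fact already used for $F_1\circ F_{S_1,S_2}$ in the proof of \Cref{th:strong-ref}. Applying the direction of \Cref{th:strong-ref} stating that a $(C\cup R)$-forward simulation implies strong refinement then gives $O[O_1]\leq_s S$.

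I expect the main obstacle to be the synchronized base-call/return case in the construction of $H$: one must verify that $G$'s expansion of a single base call/return into base-internal actions plus one matching call/return can be threaded through the product without desynchronizing the context. This works precisely because $\Lambda$ is $G$'s observable alphabet (so there is exactly one base call/return to synchronize on) and because $\Lambda\cap\Gamma=\emptyset$ (so none of these actions are visible to the outer simulation). A secondary point worth stating explicitly is that the sharing conventions for parametrization (the footnote) guarantee that $C$'s transitions on $a\in\Lambda$ are available unchanged in both $O_1\times C$ and $S_1\times C$, which is what makes the identity-on-$C$ relation a legitimate simulation.
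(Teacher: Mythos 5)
Your proof is correct and takes essentially the same route as the paper's: turn both hypotheses into $(C\cup R)$-forward simulations via Theorem~\ref{th:strong-ref}, compose the base-object simulation with the parametrized-object simulation, and convert the composite back into strong refinement. The only difference is one of detail: the paper's one-sentence argument leaves implicit the lifting of the base simulation $G$ to a simulation $H$ between the products $O_1\times C$ and $S_1\times C$, and your explicit construction of $H$ (in particular the case analysis for steps synchronized on base call/return actions, using $\Lambda\cap\Gamma=\emptyset$) is exactly the content that makes that sentence rigorous.
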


Finally, it can be shown that the existence of forward simulations is equivalent to \emph{strong linearizability}~\cite{DBLP:conf/stoc/GolabHW11} when concrete objects are related to \emph{atomic} abstract objects. Thus, let $O_2$ be an atomic object defined by a set of sequential histories $\mathit{Seq}$, i.e., $H(O_2)=\{h: \exists h'\in\mathit{Seq}.\ h\sqsubseteq h'\}$. We say that an object $O_1$ is \emph{strongly linearizable} w.r.t. $O_2$, written $O_1\sqsubseteq_s O_2$, when there exists a function $f:T(O_1)\rightarrow Seq$ such that (1) for any trace $\tau\in T(O_1)$, $\mathit{hist}(\tau)\sqsubseteq f(\tau)$, and (2) $f$ is prefix-preserving, i.e., for any two traces $\tau_1,\tau_2\in T(O_1)$ such that $\tau_1$ is a prefix of $\tau_2$, $f(\tau_1)$ is a prefix of $f(\tau_2)$. It can be shown that the function $f$ induces a forward simulation and vice-versa (the proof is given in Appendix~\ref{app:strong-lin}).

\vspace{-1.5mm}
\begin{theorem}\label{th:strong-lin}
If $O_2$ is atomic, then $O_1\sqsubseteq_s O_2$ iff there exists a $(C\cup R)$-forward simulation from $O_1$ to $O_2$.
\vspace{-2.5mm}
\end{theorem}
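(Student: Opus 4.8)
The plan is to prove both directions by exhibiting an explicit, \emph{incremental} translation between the linearization function $f$ and the simulation relation $F$, exploiting the fact that an atomic $O_2$ can be viewed as an LTS whose internal actions are ``linearization points'': a state records the sequence of operations already committed (an element of $Seq$, which I take to be prefix-closed as is standard for sequential specifications) together with the pending calls; a call adds a pending operation, an internal linearization action appends one pending operation to the committed sequence and fixes its return value, and a return removes a committed-but-unreturned operation whose value matches. Two facts about such an $O_2$ will be used repeatedly. (i) Every trace $\sigma$ of $O_2$ determines a committed sequence $\mathit{com}(\sigma)\in Seq$, read off in the order its linearization actions occur, and $\mathit{hist}(\sigma)\sqsubseteq\mathit{com}(\sigma)$. (ii) Conversely, whenever $\mathit{hist}(\tau)\sqsubseteq h'$ with $h'\in Seq$, one can interleave the call/return actions of $\mathit{hist}(\tau)$ with linearization actions, each placed inside its operation's call--return interval and ordered as in $h'$, to obtain a valid $O_2$-trace $\sigma$ with $\mathit{hist}(\sigma)=\mathit{hist}(\tau)$ and $\mathit{com}(\sigma)$ equal to the committed part of $h'$; this placement is possible precisely because a linearization preserves the real-time (return-before-call) order.

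For the direction from forward simulation to strong linearizability, I would first fix a deterministic strategy resolving the nondeterminism of $F$ (for each related pair and each enabled $O_1$-step $a$, a fixed witnessing target $O_2$-state and connecting sequence $\tau$ with $\tau|(C\cup R)=a|(C\cup R)$). Replaying this strategy along the unique $O_1$-execution of a trace $\tau$, starting from $(s_0^1,s_0^2)$ and advancing one $O_1$-step at a time, produces an $O_2$-trace $\sigma(\tau)$ with $\sigma(\tau)|(C\cup R)=\mathit{hist}(\tau)$; since extending $\tau$ only ever \emph{appends} to $\sigma(\tau)$, the map $\sigma$ is prefix-monotone. Setting $f(\tau)=\mathit{com}(\sigma(\tau))$ then gives, by fact (i), a map into $Seq$ with $\mathit{hist}(\tau)\sqsubseteq f(\tau)$; prefix-monotonicity of $\sigma$, together with determinism of $O_2$ (so the value committed at each linearization action does not depend on later steps), yields that $f$ is prefix-preserving.

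For the converse, given the prefix-preserving $f$, I would build for each $\tau\in T(O_1)$ an $O_2$-trace $\sigma(\tau)$ realizing $\mathit{hist}(\tau)$ with linearization actions ordered as in $f(\tau)$, defining it \emph{incrementally}: $\sigma(\epsilon)=\epsilon$ (note $f(\epsilon)=\epsilon$ is forced by $\epsilon\sqsubseteq f(\epsilon)$), and on extending $\tau$ by one $O_1$-action $a$ I append to $\sigma(\tau)$ the action $a|(C\cup R)$ (if any) together with the linearization actions for the operations in the suffix of $f(\tau\cdot a)$ over $f(\tau)$ supplied by prefix-preservation, emitting a return only after the linearization of its operation. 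Letting $s_2(\tau)$ be the unique $O_2$-state reached by $\sigma(\tau)$, I set $F=\{(\mathit{state}_1(\tau),s_2(\tau)):\tau\in T(O_1)\}$, where $\mathit{state}_1(\tau)$ is the $O_1$-state after $\tau$. The initial condition holds via $\tau=\epsilon$, and for the simulation step I take any $(s_1,s_2)\in F$ witnessed by some $\tau$ and any $O_1$-step $(s_1,a,s_1')$; then $\tau\cdot a\in T(O_1)$ reaches $s_1'$, $\sigma(\tau\cdot a)=\sigma(\tau)\cdot\tau'$ with $\tau'|(C\cup R)=a|(C\cup R)$, and determinism of $O_2$ gives $s_2\xrightarrow{\tau'}_{O_2}s_2(\tau\cdot a)$ with $(s_1',s_2(\tau\cdot a))\in F$, as required.

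I expect the main obstacle to be fact (ii) in its incremental form used by the converse: showing that the linearization order dictated by $f(\tau\cdot a)$ can always be materialized by \emph{appending only} to $\sigma(\tau)$ (so that $\sigma$ stays prefix-monotone and $F$ is well-defined as a state relation), and in particular that each freshly committed operation's linearization action legitimately falls inside its call--return interval while a new return is emitted only after its operation has been committed. This is exactly where prefix-preservation of $f$ is essential: without it, extending $\tau$ could force reordering of already-placed linearization actions, destroying both the monotonicity of $\sigma$ and the relation $F$. A secondary point to handle cleanly is that $F$ relates each $O_1$-state to possibly many $O_2$-states (one per witnessing trace); this is harmless, since the simulation condition is checked per witnessing $\tau$, but it should be stated explicitly.
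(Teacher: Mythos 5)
Your proposal is correct and takes essentially the same route as the paper's proof in Appendix~\ref{app:strong-lin}: you model the atomic $O_2$ as an LTS whose internal actions are linearization points, obtain the simulation from strong linearizability as the graph of $\tau\mapsto(\mathit{hist}(\tau),f(\tau))$ built incrementally along traces of $O_1$, and conversely recover $f$ by replaying a (determinized) forward simulation and reading off the committed sequence. Your explicit handling of the nondeterminism of $F$ and of prefix-closure of $\mathit{Seq}$ just makes precise two points the paper's proof leaves implicit.
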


 \section{Strong Observational Refinements of Non-Atomic Specifications}
\label{sec:applications}

\vspace{-1mm}
We demonstrate that many concurrent objects defined in the literature are strong observational refinements of much simpler abstract objects, even though not necessarily atomic. We focus on objects which are not strongly linearizable, since by Theorem~\ref{th:strong-lin}, the latter are strong refinements of atomic objects.

\lstset{numbers=left,  stepnumber=1, numberblanklines=false, numberstyle=\tiny,basicstyle=\scriptsize, numbersep=3pt, escapeinside={/*}{*/}}
\begin{figure}[t]
\center
  \footnotesize
  \begin{minipage}[c]{75mm}
    \begin{program}
procedure update(i,data)
  mem[i] = data;

procedure scan()
  for i = 1 to n do r1[i] = mem[i];
  repeat
    r2 = r1;
    for i = 1 to n do r1[i] = mem[i];
  until r1 == r2
  return r1;
    \end{program}
  \end{minipage}
  \begin{minipage}[c]{50mm}
    \begin{program}
procedure update(i,data)
  mem[i] = data;

procedure scan()
  while ( nondet )
    r = atomic_snapshot();
    snaps = snaps $\cdot$ r;
  return r1 $\in$ snaps;
    \end{program}
  \end{minipage}
  \vspace{-4mm}
  \footnotesize \caption{ A snapshot object (on the left), and a concurrent specification (on the right). The shared state of both is an array {\tt mem} of size {\tt n}. The local variables {\tt r1}, {\tt r2}, and {\tt r} are arrays of size {\tt n} (initialized to the same value as {\tt mem}). The local variable {\tt snaps} is a sequence of arrays of size {\tt n} ($\cdot$ denotes the concatenation operator), initially containing a single array which equals the initial value of {\tt mem}. The use of {\tt nondet} means that the loop is executed for an arbitrary number of times. The procedure {\tt atomic\_snapshot} returns a snapshot of {\tt mem} in a single step executed in isolation.}
  \label{fig:snapshots}
  \vspace{-3mm}
\end{figure}

Figure~\ref{fig:snapshots} lists an implementation of a snapshot object with two methods {\tt update(i,data)} for writing the value {\tt data} to a location {\tt i} of a shared array {\tt mem},  and {\tt scan()} for returning a snapshot of the array {\tt mem}.\footnote{This is a simplified version of the snapshot object defined by Afek et al.~\cite{DBLP:journals/jacm/AfekADGMS93}.} While the implementation of {\tt update} is obvious, a {\tt scan} operation performs several ``collect'' phases, where it reads successively all the cells of {\tt mem}, until two consecutive phases return the same array.

This object does \emph{not} admit a forward simulation towards the standard atomic specification where the method {\tt scan} takes a \emph{single} instantaneous snapshot of the entire array which is subsequently returned (it is not a strong refinement of such a specification). Intuitively, this holds because the linearization point of  {\tt scan} depends on future steps in the execution, e.g., a read in the second {\tt for} loop is a linearization point only if it is not followed by updates on array cells before and after the current loop index. This is exactly the scenario in which backward simulations are necessary, intuitively, reading an execution backwards it is possible to identify precisely the linearization points of {\tt scan} invocations. The impossibility of defining such a forward simulation is also a consequence of the fact that this object is not strongly linearizable~\cite{DBLP:conf/stoc/GolabHW11}.

However, this object is a strong refinement of the simpler ``concurrent'' specification given on the right of Figure~\ref{fig:snapshots} (see Appendix~\ref{app:snapshot}). The implementation of {\tt update} remains the same, while a {\tt scan} operation performs a sequence of \emph{instantaneous} snapshots of the entire array {\tt mem} and returns \emph{any} snapshot in this sequence. Compared to the implementation on the left, it is simpler because it does not allow that reading the array {\tt mem} is interleaved with other operations. However, it is not atomic since an execution of {\tt scan} contains more than one step. In comparison with the atomic specification, the sequence of snapshots in {\tt scan} allows that an adversary (scheduler) decides on the return value ``lazily'' after observing other invocations, e.g., updates, exactly as in the concrete implementation. Therefore, the abstract specification in Figure~\ref{fig:snapshots} can be used while reasoning about hyperproperties of clients, which is not the case for the atomic specification.

Beyond snapshot objects, Bouajjani et al.~\cite{DBLP:conf/cav/BouajjaniEEM17} show that a similar simplification holds even for concurrent queues and stacks which are not strongly linearizable, e.g., Herlihy\&Wing queue~\cite{DBLP:journals/toplas/HerlihyW90} and Time-Stamped Stack~\cite{DBLP:conf/popl/DoddsHK15}. These objects admit forward simulations towards ``concurrent'' specifications where roughly, the elements are stored in a partially-ordered set instead of a sequence (which is consistent with the real-time order between the enqueues/pushes that added those elements). The enqueues/pushes have no internal steps, while the dequeues/pops have a single internal step which roughly, corresponds to a linearization point that extracts a minimal (for queues) or maximal (for stacks) element from the partially-ordered set. The stack of Afek at al.~\cite{DBLP:journals/dc/AfekGM07} can also be proved to be a strong refinement of such a specification.
These forward simulations imply that these objects are strong refinements of their specifications.

 \vspace{-1mm}
\section{Related Work and Discussion}
\label{sec:summary}

\vspace{-1mm}
An important contribution of our paper is to put the work on strong linearizability~\cite{DBLP:conf/wdag/DenysyukW15,DBLP:conf/podc/HelmiHW12,DBLP:conf/stoc/GolabHW11}
in the context of standard results concerning
hyperproperties~\cite{DBLP:journals/jcs/ClarksonS10,DBLP:conf/post/ClarksonFKMRS14}
and property-preserving refinements~\cite{DBLP:conf/icalp/AlurCZ06,DBLP:journals/iandc/LynchV95,DBLP:conf/sp/McLean94}.
McLean~\cite{DBLP:conf/sp/McLean94} showed that refinements do not
preserve security properties, which were later found to be instances
of the more generic notion of hyperproperty~\cite{DBLP:journals/jcs/ClarksonS10}.
By exploiting the equivalence between linearizability and refinement~\cite{DBLP:conf/popl/BouajjaniEEH15,DBLP:journals/tcs/FilipovicORY10},
our paper clarifies that a stronger notion of linearizability is needed
because standard linearizability does not preserve hyperproperties.

Our notion of strong observational refinement is a variation of the
hyperproperty-preserving refinement introduced in~\cite{DBLP:journals/jcs/ClarksonS10},
which takes into account the specificities of concurrent object clients.
The relationship between forward simulations and preservation of
hyperproperties has been investigated in~\cite{DBLP:conf/icalp/AlurCZ06}.
They show that the existence of forward simulations is \emph{sufficient}
for preserving some specific class of hyperproperties (information-flow
security properties like non-interference),
corresponding to the straightforward direction of Theorem~\ref{th:strong-ref}
(Lemma~\ref{lem:fsim1});
they also show that their condition is \emph{not necessary} in their context.
In contrast, our work shows that the existence of forward simulations 
is \emph{both necessary and sufficient} for preserving any hyperproperty
in the context of concurrent object clients.

An important consequence of our results is that strong 
linearizability~\cite{DBLP:conf/stoc/GolabHW11}
is equivalent to the existence of a forward simulation towards an atomic specification.
The equivalence to the well-studied notion of forward simulation immediately
implies methods for composing concurrent objects,
in particular, \emph{locality} and \emph{instantiation}.
This stands in contrast to the effort needed to prove similar results
in~\cite{DBLP:conf/stoc/GolabHW11} and~\cite{OvensWoelfel2019}.

While \cite{DBLP:conf/stoc/GolabHW11} relates strong linearizability to
an ad-hoc notion of reasoning about randomized programs when
replacing objects by their atomic specifications,
the equivalence we prove implies that strong linearizability
is necessary and sufficient for preserving hyperproperties in this context.
Note that forward simulations are more general than strong linearizability.
Section~\ref{sec:applications} presents several objects which are \emph{not} strongly linearizable, but which admit forward simulations towards non-atomic abstract specifications. 
Our results imply that it is sound to use such specifications 
when reasoning about hyperproperties of client programs. 
Moreover, as opposed to strong linearizability, 
forward simulations are applicable to \emph{interval-linearizable} 
objects~\cite{DBLP:journals/jacm/CastanedaRR18}, 
which do not have any atomic specification,
but are essentially LTSs as in our formalization.

Finally, Bouajjani et al.~\cite{DBLP:conf/cav/BouajjaniEEM17} 
show that intricate implementations of concurrent stacks and 
queues like Herlihy\&Wing queue~\cite{DBLP:journals/toplas/HerlihyW90} 
and Time-Stamped Stack~\cite{DBLP:conf/popl/DoddsHK15} 
admit forward simulations towards non-atomic abstract specifications, 
but they do not discuss the connection between existence of 
forward simulations and preservation of hyperproperties,
which is the main contribution of our paper.

Our definition of strong observational refinement and its deep relation
to forward simulations deepens our understanding of the role of strong
linearizability in preserving hyperproperties.
We plan to explore strong observational refinement of almost-atomic
objects and develop additional proof methodologies.
Also, our notion of strong observational refinement uses deterministic
schedulers that model strong adversaries
w.r.t.~Aspnes' classification~\cite{DBLP:journals/dc/Aspnes03},
and it is interesting to explore variations of this notion
that take into account other adversary models.

\clearpage

\bibliographystyle{acm}
\bibliography{misc,dblp}

\appendix
\section{Proof of Theorem~\ref{th:sor_preservation}}

\begin{proof}
Assume that $O_1\leq_s O_2$ and $P\times O_2\models\varphi$ for some hyperproperty $\varphi$ of $P$. Let $S_1$ be a deterministic scheduler admitted by $P \times O_1$. Since $O_1\leq_s O_2$, there exists a deterministic scheduler $S_2$ admitted by $P \times O_2$ such that $T(P \times O_1,S_1)|\Sigma_p=T(P \times O_2,S_2)|\Sigma_p$. Since, $P\times O_2\models\varphi$, we get that $T(P \times O_2,S_2)|\Sigma_p\in \varphi$, which implies that $T(P \times O_1,S_1)|\Sigma_p\in \varphi$. Therefore, $P\times O_1\models\varphi$.
\end{proof}

 \section{Proof of Lemma~\ref{lem:fsim1}}\label{app:suff}

\begin{proof}
Let $O_1= (Q_1,\Sigma_1,  s_0^1, \delta_1)$ and $O_2=(Q_2,\Sigma_2,  s_0^2, \delta_2)$ be two objects, and
$F$ a $(C\cup R)$-forward simulation from $O_1$ to $O_2$. Let $P$ be a program and $S_1$ a deterministic scheduler admitted by $P\times O_1$.
We define a deterministic scheduler $S_2$ admitted by $P\times O_2$ inductively as follows:
$$
S_2(\epsilon)  = S_1(\epsilon) \hspace{1cm}
S_2(\overline{\tau})  = \left\{
                \begin{array}{ll}
                  S_1(\tau),  \mbox{if $S_1(\tau)\subseteq \Sigma_p$} \\[1mm]
                  F(S_1(\tau)), \mbox{otherwise}
                \end{array}
              \right.
$$
where $\overline{\tau}$ is the trace of $P\times O_2$ associated to the trace $\tau$ of $P\times O_1$ by previous iterations of this inductive definition, and $F(S_1(\tau))$ is a sequence of actions of  $O_2$ simulating the action $S_1(\tau)$ (which is an action of $O_1$ in this case) in the state reached after the trace $\tau|\Sigma_1$. Formally, if $s_0^1\xrightarrow{\tau|\Sigma_1}_{O_1}s_1$, then a simple induction on the length of executions can show that $(s_1,s_2)\in F$ where $s_0^2\xrightarrow{\overline{\tau}|\Sigma_2}_{O_2}s_2$.
Then, since $s_1\xrightarrow{S_1(\tau)}_{O_1} s_1'$ is a transition of $O_1$ and $F$ is a forward simulation, we get that there exists $s_2'$ such that $(s_1',s_2')\in F$ and $s_2\xrightarrow{\sigma}_{O_2} s_2'$ and $\sigma|(C\cup R)=S_1(\tau)|(C\cup R)$. We define $F(S_1(\tau))=\sigma$.

The scheduler $S_2$ is a slight deviation from the definition of a deterministic scheduler because $F(S_1(\tau))$ is not necessarily a singleton. However, the definition of $S_2$ can be adapted easily such that this sequence of steps is performed one by one.

Since $F$ is a $(C\cup R)$-forward simulation, $T(P \times O_1,S_1)|\Sigma_p\subseteq T(P \times O_2,S_2)|\Sigma_p$ is obvious. The reverse follows from the fact that $S_2$ is defined inductively following the definition of $S_1$.
\end{proof}
 \section{Relating Forward Simulations and Strong Linearizability}\label{app:strong-lin}

The following result uses a concrete definition of atomic specifications as LTSs. Thus, an \emph{atomic} object is an LTS $O=(Q,\Sigma,s_0,\delta)$ where the states are pairs formed of a history $h$ and a linearization $h_s$ of $h$, i.e., $Q=\{(h,h_s): h_s\in Seq\mbox{ and }h\sqsubseteq h_s\}$, the internal actions are \emph{linearization point} actions $\mathit{lin}(k)$ (for linearizing an operation with identifier $k$), i.e., $\Sigma=C\cup R\cup \{\mathit{lin}(k): k\in \<Ops>\}$, the initial state contains an empty history and linearization, i.e., $s_0=(\epsilon,\epsilon)$, and the transition relation is defined by: $((h,h_s),a,(h',h_s'))\in \delta$ if
\begin{align*}
& a\in C \implies h'=h\cdot a \mbox{ and } h_s'=h_s   \\
& a\in R\implies h'=h\cdot a\mbox{ and $h_s'=h_s$ and $a$ occurs in $h_s'$} \\
& a=lin(k) \implies h'=h \mbox{ and $h_s'=h_s\cdot call(m,d_1,k)\cdot ret(m,d_2,k)$, for some $m$, $d_1$, and $d_2$.}
\end{align*}
Call actions are only appended to the history $h$, return actions ensure that additionally, the linearization $h_s'$ contains the corresponding operation, and linearization points extend the linearization with a new operation.

\begin{lemma}\label{lem:strong-lin1}
Let $O_1$ be an object and $O_2$ an atomic object. If $O_1$ is strongly linearizable w.r.t. $O_2$, then there exists a $(C\cup R)$-forward simulation from $O_1$ to $O_2$.
\end{lemma}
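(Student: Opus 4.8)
The plan is to exhibit an explicit $(C\cup R)$-forward simulation $F$ from $O_1$ to $O_2$ built directly from the prefix-preserving linearization function $f$. For a state $s$ of $O_1$ and a state $(h,h_s)$ of $O_2$, I would set $(s,(h,h_s))\in F$ iff there is a trace $\tau\in T(O_1)$ with $s_0^1\xrightarrow{\tau}_{O_1}s$, $h=\mathit{hist}(\tau)$, and $h_s=f(\tau)$. First I would check that the right-hand component is always a legal $O_2$-state: by property~(1), $\mathit{hist}(\tau)\sqsubseteq f(\tau)$ and $f(\tau)\in\mathit{Seq}$, so $(\mathit{hist}(\tau),f(\tau))\in Q_2$. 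For the base case, taking $\tau=\epsilon$ gives $\mathit{hist}(\epsilon)=\epsilon$; since the only linearization of the empty history is empty, $f(\epsilon)=\epsilon$, whence $(s_0^1,s_0^2)=(s_0^1,(\epsilon,\epsilon))\in F$. Note $F$ is a genuine relation (one $O_1$-state may be reached by several traces), but the forward-simulation clause only demands existence of a matching target, so picking the witness $\tau$ supplied with the given pair suffices.

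For the step condition, suppose $(s,(h,h_s))\in F$ with witness $\tau$ and let $(s,a,s_1')\in\delta_1$. Then $\tau\cdot a\in T(O_1)$ reaches $s_1'$, so it suffices to realize a path in $O_2$ from $(h,h_s)=(\mathit{hist}(\tau),f(\tau))$ to $(\mathit{hist}(\tau a),f(\tau a))$ whose projection on $C\cup R$ equals $a|(C\cup R)$; relating $s_1'$ to $(\mathit{hist}(\tau a),f(\tau a))$ via the witness $\tau a$ then closes the induction. By prefix-preservation, $f(\tau)$ is a prefix of $f(\tau a)$, so $f(\tau a)=f(\tau)\cdot w$; because both are sequential histories in $\mathit{Seq}$, the suffix $w=o_1\cdots o_j$ is a sequence of complete operations $o_i=call(m_i,d_i,k_i)\cdot ret(m_i,d_i',k_i)$, each appendable to the linearization by a single $\mathit{lin}(k_i)$ step. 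I would then split on whether $a$ is internal, a call, or a return: in each case $\mathit{hist}(\tau a)$ equals $h$ (internal) or $h\cdot a$ (call/return), and the path consists of the $\mathit{lin}$ steps appending $w$ together with, for a call/return $a$, one transition labelled $a$, so its $C\cup R$-projection is exactly $a|(C\cup R)$.

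The hard part is ordering these transitions and certifying that every intermediate configuration is a legal $O_2$-state. For a return $a=ret(m,d,k)$, the $O_2$ return rule fires only if the matching operation already occurs in the linearization, so I must append the prefix of $w$ up to operation $k$ (when $k$ is linearized in $w$ rather than already in $f(\tau)$) before performing $a$, and the remainder afterwards; for calls and internal steps the action may be taken first. The key invariant is that each truncation $f(\tau)\cdot o_1\cdots o_i$ is a valid linearization of the history reached so far. This follows from a \emph{truncation principle}: the operations in $w$ are exactly operations still pending (unmatched) in the current history that $f$ chooses to complete, and deleting the trailing, not-yet-appended ones from the known-valid linearization $f(\tau a)$ yields another valid linearization, since removing a completed-pending operation only realizes a different completion and never violates the real-time (return-before-call) order. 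Matching of return values comes for free from $\mathit{hist}(\tau a)\sqsubseteq f(\tau a)$. Assembling the three cases produces the required path and shows $F$ is a $(C\cup R)$-forward simulation; I expect the bookkeeping around the return rule and intermediate-state validity to be the only genuinely delicate point, with determinism of $O_1$ (and the nondeterminism of the atomic $O_2$) playing no essential role since the simulation is built by hand.
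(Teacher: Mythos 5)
Your proposal is correct and takes essentially the same route as the paper's proof: the identical witness-trace relation $F$ relating $\mathit{state}(\tau)$ to $(\mathit{hist}(\tau),f(\tau))$, the same use of prefix-preservation of $f$, and the same case split on $a$ being a call, return, or internal action, with $\mathit{lin}(k)$ steps realizing the suffix appended to the linearization. If anything, you are more careful than the paper on two points it elides—the possibility that $f(\tau\cdot a)$ appends several operations even when $a$ is a call or return (the paper's call case tacitly assumes none and its return case exactly one), and the legality of the intermediate $O_2$-states, which your truncation argument justifies.
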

\begin{proof}
Let $\mathit{state}(\tau)$ denote the state of $O_1$ reached after a trace $\tau$ (since $O_1$ is deterministic, this state is unique). Also, let $F$ be a relation between states of $O_1$ and $O_2$ defined by $(s_1,s_2)\in F$ iff there exists a trace $\tau$ such that $s_1=\mathit{state}(\tau)$ and $s_2=(\mathit{hist}(\tau),f(\tau))$ (by the definition of $f$ in strong linearizability, the latter is a valid state of $O_2$)~\footnote{We use the definition of atomic objects given in Section~\ref{sec:objects}.}.

We show that $F$ is a $(C\cup R)$-forward simulation from $O_1$ to $O_2$. The fact that it relates the initial object state $s_0^1$ and the initial state $(\epsilon,\epsilon)$ of $O_2$ is trivial. Now, let
$s_1,s_1'\in Q_1$, $a\in \Sigma_1$, and $s_2\in Q_2$, such that $(s_1,a,s_1') \in \delta_1$ and $(s_1,s_2)\in F$. We have to show that there exists $s_2'\in Q_2$ and $\sigma\in \Sigma_2$ such that $(s_1',s_2')\in F$, $s_2 \xrightarrow{\sigma}_{O_2} s_2'$, and $\sigma| (C\cup R)=a| (C\cup R)$.
(
Let $\tau$ be a trace such that $s_1=\mathit{state}(\tau)$ and $s_2=(\mathit{hist}(\tau),f(\tau))$. Then, $s_1'=\mathit{state}(\tau\cdot a)$ and $f(\tau)$ is a prefix of $f(\tau\cdot a)$. Several cases are to be discussed:
\begin{itemize}
	\item if $a\in C$, then $\mathit{hist}(\tau\cdot a)=\mathit{hist}(\tau)\cdot a$ and $\mathit{hist}(\tau)\cdot a\sqsubseteq f(\tau)$ provided that $\mathit{hist}(\tau)\sqsubseteq f(\tau)$. Therefore, $s_2\xrightarrow{a}_{O_2}(\mathit{hist}(\tau)\cdot a,f(\tau))$ and $(s_1',(\mathit{hist}(\tau)\cdot a,f(\tau)))\in F$.
	\item if $a\in R$ and the operation identifier $k$ in $a$ occurs in $f(\tau)$, then $s_2\xrightarrow{a}_{O_2}(\mathit{hist}(\tau)\cdot a,f(\tau))$ and $(s_1',(\mathit{hist}(\tau)\cdot a,f(\tau)))\in F$ like above. If $k$ does not occur in $f(\tau)$, then $f(\tau\cdot a)=f(\tau)\cdot c\cdot a$ where $c$ is the call action corresponding to $a$ (otherwise, $f(\tau\cdot a)$ would not be a linearization of $\mathit{hist}(\tau\cdot a)$). By the definition of $O_2$, we have that $s_2\xrightarrow{\mathit{lin}(k)\cdot a}_{O_2}(\mathit{hist}(\tau)\cdot a,f(\tau\cdot a))$ which concludes the proof of this case.
	\item if $a\not\in C\cup R$, then $f(\tau\cdot a)$ is obtained from $f(\tau)$ by appending some sequence of operations with identifiers $k_1$, $\ldots$, $k_n$ (this follows from the fact that $f$ is prefix-preserving). Then, $s_2\xrightarrow{\mathit{lin}(k_1)\cdot\ldots\cdot \mathit{lin}(k_n)}_{O_2}(\mathit{hist}(\tau),f(\tau\cdot a))$ and $(s_1',(\mathit{hist}(\tau),f(\tau\cdot a)))\in F$ (because in this case, $\mathit{hist}(\tau\cdot a)=\mathit{hist}(\tau)$).
\end{itemize}
\end{proof}

\begin{lemma}
Let $O_1$ be an object and $O_2$ an atomic object. If there exists a $(C\cup R)$-forward simulation from $O_1$ to $O_2$, then $O_1$ is strongly linearizable w.r.t. $O_2$.
\end{lemma}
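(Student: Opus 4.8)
The plan is to read the strong-linearization function $f$ directly off the forward simulation, by replaying it along the unique $O_1$-execution that each trace determines. First I would fix a \emph{choice function} $g$ that resolves the existential in the definition of the $(C\cup R)$-forward simulation $F$: for every tuple $(s_1,a,s_1',s_2)$ with $(s_1,a,s_1')\in\delta_1$ and $(s_1,s_2)\in F$, let $g(s_1,a,s_1',s_2)=(\sigma,s_2')$ be one fixed witness, so that $(s_1',s_2')\in F$, $s_2\xrightarrow{\sigma}_{O_2}s_2'$, and $\sigma|(C\cup R)=a|(C\cup R)$. Fixing $g$ is what turns the relational simulation into a deterministic, replayable strategy, and it is precisely the ingredient that will make $f$ both well-defined and prefix-preserving.

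Next, given a trace $\tau=a_0\cdots a_{k-1}\in T(O_1)$, I would use the determinism of $O_1$ to obtain its unique execution $s_0^1\xrightarrow{a_0}_{O_1}\cdots\xrightarrow{a_{k-1}}_{O_1}s_k^1$, and then build a matching $O_2$-execution inductively: starting from $(s_0^1,s_0^2)\in F$, at step $i$ apply $g(s_i^1,a_i,s_{i+1}^1,t_i)=(\sigma_i,t_{i+1})$, where $t_i$ is the current $O_2$-state, appending $\sigma_i$. This yields $s_0^2\xrightarrow{\sigma_0\cdots\sigma_{k-1}}_{O_2}t_k$ with $(s_k^1,t_k)\in F$. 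By the structure of the atomic object $O_2$, the state $t_k$ is a pair $(h,h_s)$ whose first component is exactly the projection of the $O_2$-run onto $C\cup R$ (call and return actions are appended to $h$, while $\mathit{lin}(k)$ actions leave $h$ unchanged); since each $\sigma_i|(C\cup R)=a_i|(C\cup R)$, this projection equals $\mathit{hist}(\tau)$, so $t_k=(\mathit{hist}(\tau),h_s)$ with $h_s\in\mathit{Seq}$ and $\mathit{hist}(\tau)\sqsubseteq h_s$. I would then define $f(\tau)=h_s$. Property~(1) of strong linearizability, $\mathit{hist}(\tau)\sqsubseteq f(\tau)$, is immediate from this state invariant, and $f(\tau)\in\mathit{Seq}$ holds because $t_k$ is a valid state of $O_2$.

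The remaining and crucial obligation is property~(2), prefix-preservation, which is the whole point of ``strong'' linearizability. Here I would argue as follows. If $\tau_1$ is a prefix of $\tau_2$, then by determinism of $O_1$ the states and actions along the execution of $\tau_1$ coincide with the first $|\tau_1|$ steps of the execution of $\tau_2$; since $g$ is a fixed function depending only on the local step data, the $O_2$-execution constructed for $\tau_1$ is literally an initial segment of the one constructed for $\tau_2$, so the $O_2$-state $(\mathit{hist}(\tau_1),f(\tau_1))$ reached for $\tau_1$ lies on the $O_2$-run for $\tau_2$. It then suffices to observe that the linearization component never shrinks along $O_2$-transitions: inspecting the three transition rules of the atomic object, a call action and a return action leave $h_s$ unchanged, while a $\mathit{lin}(k)$ action only \emph{appends} a matched pair $call(m,d_1,k)\cdot ret(m,d_2,k)$ to $h_s$. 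Hence the linearization at an earlier state of any $O_2$-run is a prefix of the one at a later state, and therefore $f(\tau_1)$ is a prefix of $f(\tau_2)$.

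I expect the main obstacle to be conceptual rather than computational: making precise that the forward simulation, driven by the fixed choice function along the deterministic $O_1$-execution, produces $O_2$-runs that are \emph{monotone} under the prefix order on traces. The two facts that combine to give this --- that the construction is forward and incremental (so prefixes of traces map to prefixes of runs) and that $O_2$'s linearization component only grows --- are individually routine, but recognizing that their conjunction is exactly what yields prefix-preservation is the heart of the argument. Everything else, namely the invariant $t_k=(\mathit{hist}(\tau),h_s)$ and property~(1), follows by a straightforward induction on $|\tau|$.
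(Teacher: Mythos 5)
Your proof is correct and takes essentially the same route as the paper's: read $f$ off the forward simulation by replaying it along the unique $O_1$-execution of each trace, obtaining an $O_2$-state $(\mathit{hist}(\tau),h_s)$ and setting $f(\tau)=h_s$, with prefix-preservation coming from the fact that the atomic object's linearization component only grows along a run. The only difference is rigor: you make explicit the choice function and the incremental replay that the paper compresses into ``$f(\tau)=h_s$ where $(\mathit{state}(\tau),(h,h_s))\in F$'' and ``prefix-preservation follows from the fact that $F$ is a forward simulation,'' and this explicitness is exactly what is needed for $f$ to be well-defined and consistent across prefixes.
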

\begin{proof}
Let $F$ be a $(C\cup R)$-forward simulation from $O_1$ to $O_2$. We define a function $f:T(O_1)\rightarrow Seq$ by $f(\tau)=h_s$ where $h_s$ satisfies $(\mathit{state}(\tau),(h,h_s))\in F$. The fact that $\mathit{hist}(\tau)\sqsubseteq f(\tau)$ for every trace $\tau$ follows from the definition since $(h,h_s)$ is a valid state of $O_2$ and $h=\mathit{hist}(\tau)$ (because $F$ preserves call and return actions). The fact that $f$ is prefix-preserving follows from the fact that $F$ is a forward simulation.
\end{proof}  \section{Snapshot Object}\label{app:snapshot}

We show that the ``concrete'' implementation of the snapshot is a strong refinement of this specification using our characterization in terms of forward simulations.
The pseudo-code descriptions in Figure~\ref{fig:snapshots} define LTSs whose states are formed of an array {\tt mem} and for each active method invocation, a valuation of its local variables including the current program location. The local variable valuation of an invocation $k$ in a state $s$ is denoted by $s[k]$. We define a forward simulation $F$ from the LTS of the ``concrete'' snapshot object to the LTS of its specification. Thus, $(s,s')\in F$ iff the {\tt mem} arrays in $s$ and $s'$ are the same, the two states contain the same set of active method invocations (matched using their identifiers), and for each {\tt scan} invocation $k$, the local state $s[k]$ in the concrete object, i.e., a valuation of {\tt r1} and {\tt r2}, and the current program location {\tt pc}, is related to the local state $s'[k]$ in the specification, i.e., a valuation of {\tt snaps}, if the following holds:
\begin{align}
\hspace{-3mm}
\bigwedge_{{\texttt r}\in \{\texttt{r1},\texttt{r2}\}}{\sf valid}(\text{\tt r},\text{\tt pc})\land {\sf fst}(\text{\tt r2},\text{\tt n}) \leq {\sf fst}(\text{\tt r1},0)\land {\sf last}(\text{\tt snaps})=\text{\tt mem} \land (\texttt{pc}=10 \implies \texttt{r1}\in \texttt{snaps})\label{eq:sim}
\end{align}
The predicate ${\sf valid}$ says that {\tt r1} and {\tt r2} are obtained by reading from a sequence of snapshots in {\tt snaps} such that the value at position $j>i$ comes from a snapshot which is at least as recent as the one from which the value at position $i$ is taken (with a slight deviation for {\tt r1}). Thus, let ${\sf fst}(v,i)$ be the smallest index of a snapshot $s\in\texttt{snaps}$ which contains the value $v$ on the $i$-th position. We define ${\sf fst}(\text{\tt r1},i)$ as $\infty$ when the invocation $k$ is ``inside'' a {\tt for} loop and not yet set the $i$-th position of {\tt r1} (which can be derived from the current program location {\tt pc}), or ${\sf fst}(\texttt{r1}[i],i)$, otherwise. Also, ${\sf fst}(\text{\tt r2},i)$ is simply ${\sf fst}(\texttt{r2}[i],i)$ for every $i$. Then,
$
{\sf valid}(\text{\tt r},\text{\tt pc}) ::= \forall i,j.\ i < j \implies {\sf fst}(\text{\tt r},i) \leq {\sf fst}(\text{\tt r},j).
$ 
Besides the ${\sf valid}$ predicates, (\ref{eq:sim}) states that {\tt r1} reads from more recent snapshots than {\tt r2}, that the last element of {\tt snaps} is the current value of {\tt mem}, and that {\tt r1} is a member of {\tt snaps} when the invocation reaches line 10 (right after the equality test).

We show that indeed the specification can mimic every step of the concrete implementation w.r.t. the simulation relation $F$ described above. The most interesting part concerns internal steps (every call/return action of the concrete implementation is mimicked by exactly the same action in the specification).

The internal step of {\tt update} is simulated by the same step of {\tt update} in the specification followed by a sequence of steps in which every {\tt scan} takes an instantaneous snapshot of {\tt mem} and appends it to its own {\tt snaps} variable. This ensures the last element of every {\tt snaps} variable is the value of {\tt mem} after the update.

The internal steps of {\tt scan} are all simulated by stuttering steps in the specification, i.e., if  $(s_1,s_2)\in F$ and $s_1'$ is a successor of $s_1$ by such a step, then $(s_1',s_2)\in F$.
Thus, for a step of {\tt scan} reading a new position in {\tt mem}, i.e., {\tt r1[i] = mem[i]}, we have that $(s_1',s_2)\in F$ because the value of {\tt r1[i]} is taken directly from {\tt mem} which is the last snapshot in every {\tt snaps} variable (therefore, the requirements on {\tt r1} from (\ref{eq:sim}) hold when adding this new value).
For a step of {\tt scan} where the equality between {\tt r1} and {\tt r2} is established, we have that $(s_1',s_2)\in F$ because this equality implies $\texttt{r1}\in \texttt{snaps}$. Indeed, the only way that {\tt r1} which reads from more recent snapshots in {\tt snaps} than {\tt r2} is equal to the latter is that they both read from the \emph{same} snapshot in {\tt snaps}. The simulation of the other internal steps of {\tt scan} is obvious.

\end{document}